\documentclass[10pt]{article}

\usepackage[margin=0.7in]{geometry}
\usepackage[T1]{fontenc}

\usepackage{amsthm}
\usepackage{amsmath}
\usepackage{graphicx}
\usepackage{amssymb}
\usepackage{epstopdf} 
\usepackage[square,numbers,sort&compress]{natbib}
\newtheorem{Theorem}{Theorem}
\usepackage{hyperref}
\usepackage{mathtools, amsfonts, amssymb, amsthm, bm, commath}
\usepackage{ float, wrapfig, tabularx, booktabs, subfig, listings}

\newtheorem{theorem}{Theorem}

\newtheorem{proposition}{Proposition}
\newtheorem{lemma}{Lemma}
\newtheorem*{proposition*}{Proposition}

\usepackage{cancel}
\def\R{{\cal R}}

\usepackage{caption}

\title{Dormancy stabilizes non-transitive competitive dynamics}

\author{\small Jos\'e Chac\'on$^{a,c}$, Adri\'an Gonz\'alez-Casanova$^{a,c,*}$, Imanol Nu\~nez$^{b}$, Rafael Pe\~na-Miller$^{d}$, Jos\'e Luis P\'erez$^{b}$, Johnny Yang$^{a,c}$ \\[0.8em] \small $^{a}$Center for Mechanisms of Evolution, Biodesign Institute, Arizona State University \\ \small $^{b}$Centro de Investigaciones en Matem\'aticas \\ \small $^{c}$School of Mathematical and Statistical Sciences, Arizona State University \\ \small $^{d}$Center for Genomic Sciences, Universidad Nacional Aut\'onoma de M\'exico, 62210, Mexico \\[0.5em] \small $^{*}$Corresponding author: \texttt{agonz591@asu.edu} }

\date{} 

\begin{document}
\maketitle
\begin{abstract}
Competitive interactions can maintain diversity, yet coexistence is often fragile in well-mixed populations, where stochastic fluctuations can lead to extinction. This is the case in non-transitive systems, such as rock–paper–scissors dynamics, where no single type dominates globally. While spatial structure can stabilize these systems by providing refuges in space, it remains unclear whether analogous mechanisms can operate in time in well-mixed environments.
Here, we develop a population-genetic framework showing that dormancy can act as a temporal refuge, preserving lineages and preventing collapse to fixation under interaction-driven fluctuations. We introduce a discrete-time Wright-Fisher model that combines generalized seed-banks with frequency-dependent interactions, allowing individuals to inherit their type from potential parents sampled across multiple past generations. This construction provides a tractable framework in which dormancy stores and later reintroduces lost types.
In the case of either weak or moderate selection, we prove a multidimensional diffusion limit for the resulting type-frequency process and use it to analyze complex selective interactions. In non-transitive systems, dormancy stabilizes trajectories that would otherwise collapse through stochastic extinction, extends fixation times, and sustains coexistence. These effects cannot be explained solely by an increase in effective population size.
Our results show that dormancy introduces temporal memory that qualitatively alters competitive dynamics, stabilizing otherwise fragile systems and enabling long-term coexistence.
\end{abstract}

\medskip \noindent\textbf{Keywords:} dormancy; population genetics; evolutionary game theory 

\twocolumn

\section{Introduction}
The maintenance of diversity under strong competition is a central problem in ecology and evolution \cite{chesson2000,chesson1981}
. Although many interaction structures can in principle support coexistence, these outcomes are often not sustained in small populations, where stochastic fluctuations can lead to extinction \cite{Reichenbach2006}. Even in environments that are constant at the abiotic level, interactions among individuals can generate effective environmental fluctuations by modulating growth and mortality rates over time. Such interaction-driven variability arises across a wide range of systems, including cyclic mating strategies in animals \cite{sinervo1996rock}, non-transitive ecological interactions \cite{kerr2002}
, and antagonistic interactions in microbial communities \cite{czaran2002,granato2019}. 
From the perspective of a focal lineage, the environment is therefore endogenous and frequency-dependent. 

The consequences of these interaction-driven fluctuations depend on the structure of competition. In transitive systems, such fluctuations reinforce competitive exclusion, leading to the loss of diversity \cite{chesson2000}. In contrast, in non-transitive systems, outcomes depend on context and no single type dominates globally. 
Cyclic dominance systems, such as rock-paper-scissors interactions, are the simplest example of this structure \cite{sinervo1996rock, 
hofbauer1998}. 
Despite their potential to maintain diversity, stochastic fluctuations can eliminate one type, ultimately leading to fixation of a single dominant type \cite{Reichenbach2006}.

Previous studies have identified several mechanisms that can stabilize non-transitive interactions. Spatial structure is a canonical explanation, stabilizing competitive dynamics by creating local refuges that protect rare types from stochastic extinction \cite{kerr2002,reichenbach2007}. Furthermore, negative frequency-dependent selection can generate rare-type advantages that maintain polymorphisms \cite{sinervo1996rock}. 
Trade-offs in life-history or interaction costs, such as those underlying toxin production and resistance, can reshape competitive dynamics and support coexistence \cite{kerr2002,czaran2002}. Mutation, phenotypic switching, and behavioral variability can continually reintroduce strategies and prevent absorption, even when deterministic models predict extinction \cite{traulsen2005}. More broadly, eco-evolutionary feedbacks and higher-order interactions can modify interaction networks over time, expanding the conditions under which diversity is maintained \cite{laird2006,allesina2011,Grilli2017}.
\smallskip

In this work we propose that dormancy provides a distinct stabilizing mechanism that operates through temporal memory, analogous to spatial refuges in structured populations. By allowing individuals to temporarily exit the active population and re-enter later, dormancy stores lost types and reintroduces them into future dynamics \cite{jones2010dormancy, shoemaker2017evolution}. 
This behavior is widespread across biological systems, including seed banks in plants \cite{leck1989ecology},
spores in microorganisms \cite{plante2023spore}, persister-like states in microbial \cite{balaban2004} and cancer cell populations \cite{phan2020dormant}, and can be formalized through seed-bank models in which a fraction of individuals enters a reversible dormant state and reemerges after many generations \cite{kaj2001coalescent, blath2016new}. 
Building on Wright–Fisher models of seed-banks and complex ecological interactions \cite{kaj2001coalescent,  blath2016new, gonzalez2018duality, casanova2020lambda, CorderoHummelSchertzer2022}, we develop a population-genetic model that integrates dormancy in frequency-dependent interactions. We show that temporal memory qualitatively alters non-transitive competitive dynamics, stabilizing communities that would otherwise collapse to fixation under stochastic fluctuations, thereby shaping community composition and sustaining diversity.

\section{A Wright Fisher with  selective 
rules and a seed-bank}

We now describe a Wright-Fisher model with complex selective interactions among multiple types and that allows individuals to choose their ancestors from multiple generations in the past. Each individual in generation $n+1$ chooses one or more \emph{potential parents} from earlier generations- The generation of each potential parent is sampled independently according to a prescribed distribution, which determines the weights assigned to the immediate, intermediate, and distant past. Then, the type of each potential parent is sampled uniformly from the population in the corresponding generation. Finally, the individual's type is determined by a \emph{coloring rule} over those chosen potential parents. Formal definitions of the model, including the concepts of potential parents and coloring rules, are further detailed in the Supplementary Information (SI).
\medskip

Here, we present several examples to illustrate the model and its flexibility. 
Starting with the simplest scenario, one or two potential parents without a seed bank, we progressively extend the model to multiple potential parents and, ultimately, to the inclusion of a seed bank. The population size is denoted by $N$ and the type space by $[K]$.

\subsubsection*{Case 1: One or two potential parents and no seed-bank}

Suppose that an individual samples exactly one parent with probability $Q_N(1)$ and two parents with probability $Q_N(2)=1-Q_N(1)$. If a single parent is sampled, the offspring copies its type. If two potential parents of types $i,j\in [K]$ are sampled, the coloring rule specifies the outcome of their interaction. For example, a coloring rule may select the larger of the two labels, so that if the potential parents are of types $2$ and $5$, the offspring is of type $5$. 
This is generalized with a \emph{tournament rule}: if both parents are of the same type, the offspring inherits that type; if they are different, the offspring inherits the preferred type according to a preference relation~$\succ$. 
For instance, when there are three types, two important examples of preference rules are the
\begin{itemize}
    \item transitive rule $$1\succ 2, 2\succ 3, 1\succ 3$$ 
    \item rock-paper-scissors rule
    \begin{equation*}
    1\succ 3, 3\succ 2, 2\succ 1
    \end{equation*}
\end{itemize}

\subsubsection*{Case 2: General interactions with two potential parents and no seed-bank}

Now assume that when an individual samples two potential parents with types $\mathbf{j}=(j_1,j_2)$, the coloring rule is given by
\[
c^N_{\mathbf{j}}=(c_{\mathbf{j}}^N(1),c_{\mathbf{j}}^N(2),...,c_{\mathbf{j}}^N(K)),
\]
where $c_{\mathbf{j}}^N(i)$ denotes the probability that the offspring is of type $i$, given potential parents of types $\mathbf{j}$. If the frequency of individuals of type $j$ in the previous generation is $z_j$, then the probability that the offspring is of type $i\in [K]$ is
\[
    p_i^N(z)
    := z_i Q_N(1)
       + Q_N(2)\sum_{\mathbf{j}\in[K]^2} z_{j_1} z_{j_2}\, c_{\mathbf{j}}^N(i).
\]
The tournament case is recovered by considering coloring rules that take values in $\{0,1/2,1\}$. Another example is the rule
\[
c^N_{j_1,j_2}(\min(j_1,j_2))
=1-c^N_{j_1,j_2}(\max(j_1,j_2))
=\frac{1}{|j_1-j_2|+1},
\]
which favors larger labels, with the advantage increasing in the distance between types.

\subsubsection*{Case 3: General interactions with many potential parents and no seed-bank}

The construction extends naturally to any number $m$ of potential parents. If an individual samples $m$ parents with types $\mathbf{j}=(j_1,\ldots,j_m)$, the coloring rule $c^N_{\mathbf{j}}$ determines the offspring type. The transition probabilities are
\begin{equation}\label{types}
    p_i^N(z)
    = \sum_{m=1}^\infty Q_N(m)
      \sum_{\mathbf{j}\in[K]^m}
      \Bigg( \prod_{\ell=1}^m z_{j_\ell} \Bigg)\,
      c_{\mathbf{j}}^N(i) \ ,  \ i\in [K].
\end{equation}
This general formulation allows for a wide range of interaction mechanisms, such as:

\medskip
\begin{itemize}
    \item Majority rule: The offspring adopts the most frequent type among its potential parents, with ties broken uniformly:
    \\
\\    $  c_{\mathbf{j}}^N(i)=$
   { \small  \begin{equation}\label{majority}
        \begin{cases}
            1, & \text{if $i$ is the majority type in } (j_1,\ldots,j_m),\\
            1/d, & \text{if $i$ is among the $d$ most frequent types},\\
            0, & \text{otherwise.}
        \end{cases}   
    \end{equation} }
    This rule captures situations in which heterozygous phenotypes are disadvantaged relative to homozygous ones.
    \item Minority rule: The offspring adopts the least frequent type among the potential parents, again with uniform tie-breaking. This rule models balancing selection.
\end{itemize}

\subsection*{Seed-bank: Sampling potential parents from the past}
We now incorporate a seed-bank such that when sampling a potential parent, an individual selects from the previous generation with probability $r_N$, and from deeper generations in the past with probability $1-r_N$. The generation offset is geometrically distributed with parameter $q_N$, so that older generations are chosen with exponentially decreasing probability.
\medskip

Let $X_i^N(n)$ equal 
\[
    \frac{\text{number of type $i$ individuals in generation $n$}}{N},
\]
which is the frequency of type $i$ at generation $n$. Since the process $(X^N(n))_{n\ge 0}$ is not Markovian, we introduce a \emph{memory variable}
\[
    Y_i^N(n) :=
    \sum_{j=1}^\infty q_N(1-q_N)^{j-1} X_i^N(n-j),
\]
which satisfies the recursion
\[
    Y^N(n+1)=q_N X^N(n)+(1-q_N)Y^N(n).
\]

Thus, each individual effectively samples parents from the mixed distribution
\[
    r_N X^N(n)+(1-r_N)Y^N(n),
\]
which combines the current population with a weighted memory of the past.

\subsubsection*{Updating the population}
Given the parental-type distribution
\[
    z = r_N X^N(n)+(1-r_N)Y^N(n),
\]
the next generation is obtained via $N$ independent draws from $p^N(z)$ in ~\eqref{types}:
\[
    X^N(n+1)
    \sim \frac{1}{N}\mathrm{Multinomial}\big(N,\,p^N(z)\big).
\]
This provides an equivalent Markovian description of the  type-frequency process.

\subsection*{Scaling Limits}

Now we let the population size $N$ tend to infinity and derive scaling limits for the type-frequency process.\medskip

We work under the main assumption 
{\small $\rho_N := 1 - Q_N(1) \to 0,$} and show that
$
(X(t)) := \lim_{N \to \infty} X^N(\lfloor \rho_N t \rfloor)
$
converges to a system of ordinary differential equations (deterministic limit) if
$(N\rho_N)^{-1} \to 0$, and to a system of stochastic differential equations (stochastic limit) if\\
$(N\rho_N)^{-1}  \to \text{constant}$.

\begin{Theorem}
\label{T1}
\small 
Consider a population with $K$ types, $\{1,2,\dots,K\}$, and a coloring rule
$c^N_{\mathbf{j}}$. Let $\mu:\mathbb{R}^K \to \mathbb{R}^K$ be such that for
$x \in \Delta_K$, $\mu(x) = (\mu_1(x),\dots,\mu_K(x))$ is defined by
{\small
\begin{equation}\label{mu}
\mu_i(x)=\rho_N^{-1}\left(
\sum_{m=1}^\infty Q_N(m)
\sum_{\mathbf{j} \in [K]^m}
\left( \prod_{\ell=1}^m x_{j_\ell} \right)
c_{\mathbf{j}}^N(i)
- x_i\right)
\end{equation}
}

Define the matrix $\zeta(x) = (\zeta_{ij}(x))_{1 \le i,j \le K}$ by
\[
\zeta_{ij}(x) :=
\begin{cases}
\qquad \quad 0, & \text{if } i < j,\\[1pt] 
\sqrt{
\frac{x_i \bigl(1 - \sum_{k=1}^i x_k\bigr)}
     {1 - \sum_{k=1}^{i-1} x_k}
},
& \text{if } i = j, \\[1pt]
- x_i
\sqrt{
\frac{x_j}
{\bigl(1 - \sum_{k=1}^{j-1} x_k\bigr)
 \bigl(1 - \sum_{k=1}^j x_k\bigr)}
},
& \text{if } i > j.
\end{cases}
\]
Then $(X(t)) = \lim_{N \to \infty} X^N(\lfloor \rho_N t \rfloor)$ is the unique strong solution
of the system of stochastic differential equations
{ \small
\begin{equation}
\label{sde_limit}
\begin{split}
dX(t)
&= \mu(X(t))\,dt
+ \sqrt{\sigma}\,\zeta(X(t))\,dB_t
+  \alpha\bigl(Y(t)-X(t)\bigr)\,dt, 
\\
dY(t)
&=  \beta\bigl(X(t)-Y(t)\bigr)\,dt,
\end{split}
\end{equation}
}
where $\lim_{N\to \infty}(N \rho_N)^{-1}= \sigma$, $\lim_{N\to \infty}(1-r_N)/\rho_N= \alpha$, $\lim_{N\to \infty}=q_N/\rho_N=\beta$ and $B$ is a $K$-dimensional Brownian motion. 
\end{Theorem}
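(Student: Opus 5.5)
We prove Theorem~\ref{T1} with the standard generator-convergence route for Markov chains converging to diffusions, applied to the Markov chain $(X^N(n),Y^N(n))_{n\ge0}$ on $\Delta_K\times\Delta_K$. Time is rescaled so that one unit of macroscopic time corresponds to $\rho_N^{-1}$ generations; this is the scaling under which the drift (\ref{mu}), $N\rho_N$, $(1-r_N)/\rho_N$ and $q_N/\rho_N$ all have finite limits. We compute the first two conditional moments of one-step increments, show that the rescaled versions converge uniformly on the compact state space to the drift and diffusion coefficients of (\ref{sde_limit}), and control higher moments. A tightness plus martingale-problem argument (e.g.\ Ethier--Kurtz, Thm.~4.8.2 / 7.4.1) then gives convergence in distribution to any solution of the martingale problem. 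Uniqueness comes from pathwise uniqueness of (\ref{sde_limit}).

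\textbf{Step 1: first moments.} Set $z=r_NX+(1-r_N)Y$. Then
\[
\mathbb{E}[X^N(n+1)-X^N(n)\mid X,Y]=p^N(z)-X=\bigl(p^N(z)-z\bigr)+(1-r_N)(Y-X),
\]
and, by definition (\ref{mu}), $p^N(z)-z=\rho_N\mu^N(z)$. Dividing by $\rho_N$ gives $\mu^N(z)+\frac{1-r_N}{\rho_N}(Y-X)$.

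Since $z-X=(1-r_N)(Y-X)=O(\rho_N)$, we need $\mu^N\to\mu$ uniformly on $\Delta_K$ together with a uniform Lipschitz bound. This holds because $\mu^N_i$ is a power series in $x$ with coefficients weighted by $Q_N(m)/\rho_N$, a probability measure on $m\ge2$ after removing the $m=1$ term. We assume, as is implicit in the statement, that these weights converge, have bounded first moment, and that $c^N\to c$. Under this assumption the drift converges to $\mu(X)+\alpha(Y-X)$.

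For $Y$ the update is deterministic: $\rho_N^{-1}(Y(n+1)-Y(n))=\frac{q_N}{\rho_N}(X-Y)\to\beta(X-Y)$.

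\textbf{Step 2: second and higher moments.} Given $(X,Y)$, the increment of $X$ is a centered multinomial plus a drift. Its conditional covariance is therefore
\[
\frac1N\bigl(\mathrm{diag}(p)-pp^{T}\bigr),\qquad p=p^N(z)=X+O(\rho_N).
\]
Rescaling by $\rho_N^{-1}$ gives $(N\rho_N)^{-1}(\mathrm{diag}(X)-XX^T)+o(1)\to\sigma(\mathrm{diag}(X)-XX^T)$. The drift contributes only $O(\rho_N^2)/\rho_N\to0$ to second moments, and $Y$ has no noise. Fourth moments of the multinomial are $O(N^{-2})$, so $\rho_N^{-1}\mathbb{E}|\Delta X|^4=O\bigl((N\rho_N)^{-1}N^{-1}+\rho_N^3\bigr)\to0$. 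This gives both tightness (Aldous/Kolmogorov criteria) and continuity of limit paths.

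It remains to check that $\zeta\zeta^T=\mathrm{diag}(x)-xx^T$. This is the known Cholesky factorization of the multinomial covariance via stick-breaking: the conditional variance of coordinate $i$ given the previous coordinates is $x_i(1-\sum_{k\le i}x_k)/(1-\sum_{k<i}x_k)$. We verify it by direct induction on $K$.

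\textbf{Step 3: identification and uniqueness.} With Steps 1--2, every limit point solves the martingale problem for the generator
\[
\mathcal{L}f=(\mu(x)+\alpha(y-x))\cdot\nabla_xf+\beta(x-y)\cdot\nabla_yf+\tfrac{\sigma}{2}\sum_{ij}(x_i\delta_{ij}-x_ix_j)\partial_{x_ix_j}f.
\]
Such a solution yields a weak solution of (\ref{sde_limit}). We then need uniqueness. The drift is locally Lipschitz (a polynomial or smooth power series). The diffusion coefficient $\zeta$, however, is only $1/2$-Hölder at the boundary of the simplex, and its denominators degenerate on faces.

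This is the \textbf{main obstacle}. I would first try to apply uniqueness of the martingale problem for Wright--Fisher diffusions with Lipschitz drift: by Girsanov or by Bass--Perkins-type results for degenerate simplex diffusions, the multi-type Wright--Fisher covariance $\mathrm{diag}(x)-xx^T$ has a unique martingale problem, and adding a bounded Lipschitz drift preserves this (Girsanov applies as long as the drift vanishes appropriately, or else via Athreya--Barlow--Bass--Perkins). Pathwise uniqueness then follows from Yamada--Watanabe applied coordinatewise in the stick-breaking coordinates $u_i=x_i/(1-\sum_{k<i}x_k)$. In these coordinates each $u_i$ has a diffusion coefficient of the form $\sqrt{u_i(1-u_i)}$ times a smooth factor, which is exactly the $1/2$-Hölder setting of Yamada--Watanabe. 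Combining the resulting pathwise uniqueness with existence of a weak solution gives a unique strong solution, completing the proof. The deterministic case $\sigma=0$ is the same argument with the noise term absent.
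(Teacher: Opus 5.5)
Your convergence argument is sound and is essentially the paper's. The paper Poissonizes time with a rate-$1/\rho_N$ clock, computes the same first and second moments, and invokes Kallenberg's Theorems 17.25 and 17.28. Your insistence on uniform convergence and a uniform Lipschitz bound for $\mu^N$, needed because $z=r_NX+(1-r_N)Y$ moves with $N$, is exactly what the paper's Lemma~1 provides.

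\textbf{The gap is the uniqueness step.} You rightly flag it as the main obstacle, but you close it with a claim that is false.
\begin{itemize}
\item In stick-breaking coordinates $u_i=x_i/(1-\sum_{k<i}x_k)$ one finds $d\langle u_i\rangle_t=\sigma\,u_i(1-u_i)\bigl(\prod_{k<i}(1-u_k)\bigr)^{-1}dt$. The factor multiplying $\sqrt{u_i(1-u_i)}$ is therefore unbounded near the faces $\{\sum_{k<i}x_k=1\}$, and it depends on $u_1,\dots,u_{i-1}$. This violates the coordinatewise H\"older condition that a (multidimensional) Yamada--Watanabe argument requires.
\item The drift is worse. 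The It\^o corrections cancel, but the drift becomes $\frac{b_i}{1-\sum_{k<i}x_k}+u_i\frac{\sum_{k<i}b_k}{1-\sum_{k<i}x_k}$ with $b_i=\mu_i(x)+\alpha(y_i-x_i)$. At $x_i=0$ the seed-bank term equals $\alpha y_i>0$ (precisely the reintroduction mechanism the paper is about), so this drift is unbounded, not Lipschitz.
\item The same feature defeats your Girsanov fallback. Started from $x_i=0<y_i$, the neutral Wright--Fisher diffusion stays on the face $\{x_i=0\}$, while the seed-bank process leaves it immediately (its drift there is $\ge\alpha y_i$). The two laws are mutually singular, so no Girsanov transform connects them.
\item Athreya--Barlow--Bass--Perkins treats diagonal, orthant-type diffusion matrices, not $\operatorname{diag}(x)-xx^{T}$, so it does not apply directly either.
\end{itemize}

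The paper avoids all of this with a \emph{linear} change of variables, the cumulative sums $U_i=\sum_{j\le i}X_j$ and $V_i=\sum_{j\le i}Y_j$. In these coordinates:
\begin{itemize}
\item the drift is $\sum_{j\le i}\tilde\mu_j(U)+\alpha(V_i-U_i)$, still Lipschitz by Lemma~1 (constant $1+\frac12\sup_N\sum_{k\ge2}kQ_N(k)/\rho_N$, obtained via a total-variation coupling);
\item $d\langle U_i\rangle_t=\sigma U_i(1-U_i)\,dt$ exactly, so each coordinate carries the one-dimensional Wright--Fisher coefficient of its own value.
\end{itemize}
Pathwise uniqueness is then taken from the multidimensional Yamada--Watanabe theorem of Graczyk--Ma{\l}ecki, and strong existence from Ikeda--Watanabe. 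Be aware that $d\langle U_i,U_l\rangle_t=\sigma U_i(1-U_l)\,dt$ for $i<l$. The noises driving different $U_i$ therefore have state-dependent correlations, and that citation deserves a careful check; still, these are the right coordinates.

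Separately, your assumption $c^N\to c$ is too weak. Single-parent events contribute $\frac{Q_N(1)}{\rho_N}\bigl(\sum_j c^N_j(i)x_j-x_i\bigr)$ to $\mu^N_i$. You therefore need $1-c^N_i(i)=O(\rho_N)$ both for $\mu^N$ to converge and for the uniform Lipschitz bound; the paper assumes $\rho_N^{-1}(1-c^N_i(i))\to0$ in hypothesis (iv). Your description of $\mu^N$ as a power series weighted by a probability measure on $m\ge2$ silently presupposes this.
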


Theorem~1 shows that, under a natural scaling, the Wright-Fisher model with complex selective interaction rules and seed-bank converges to a coupled diffusion system in which selection, stochastic drift, and dormancy 
jointly shape the long-term evolutionary dynamics. 

\section{Results}

Three distinguished examples of evolutionary games considered within our framework are the canonical rock-paper-scissors selective interaction, tournaments extending this type of interaction to many types, and the majority voting rule. 
Memory has an important effect on each of these games, as will be explained in the rest of this section.
\medskip

Hereafter, for the interpretation of simulations and figures, $N$ denotes the population size and $\sigma$ is understood as the reciprocal of the effective population size. The different seed bank regimes are defined through $\alpha$, the average number of individuals in the present population descending from a given individual in the distant past, and $\beta N$, the typical number of generations separating such individuals from their ancestors.

\subsection{Rock-Paper-Scissors}
We analyze the effect of sustaining a seed bank when three types of individuals compete under rock-paper-scissors (RPS) dynamics, under nine combinations of various population sizes and seed bank regimes (see Figure 1). A graphical representation of the RPS rule and of the corresponding model are presented in Fig.~\ref{fig:FixingTimes}A and Fig.~\ref{fig:FixingTimes}B, respectively. Simulations are based on the scaling limit of Theorem \ref{T1}. Further details of both the discrete model and its scaling limit are provided in the Methods section and described more formally in the SI.

\begin{figure*}[!t]
    \centering\includegraphics[width=12cm]{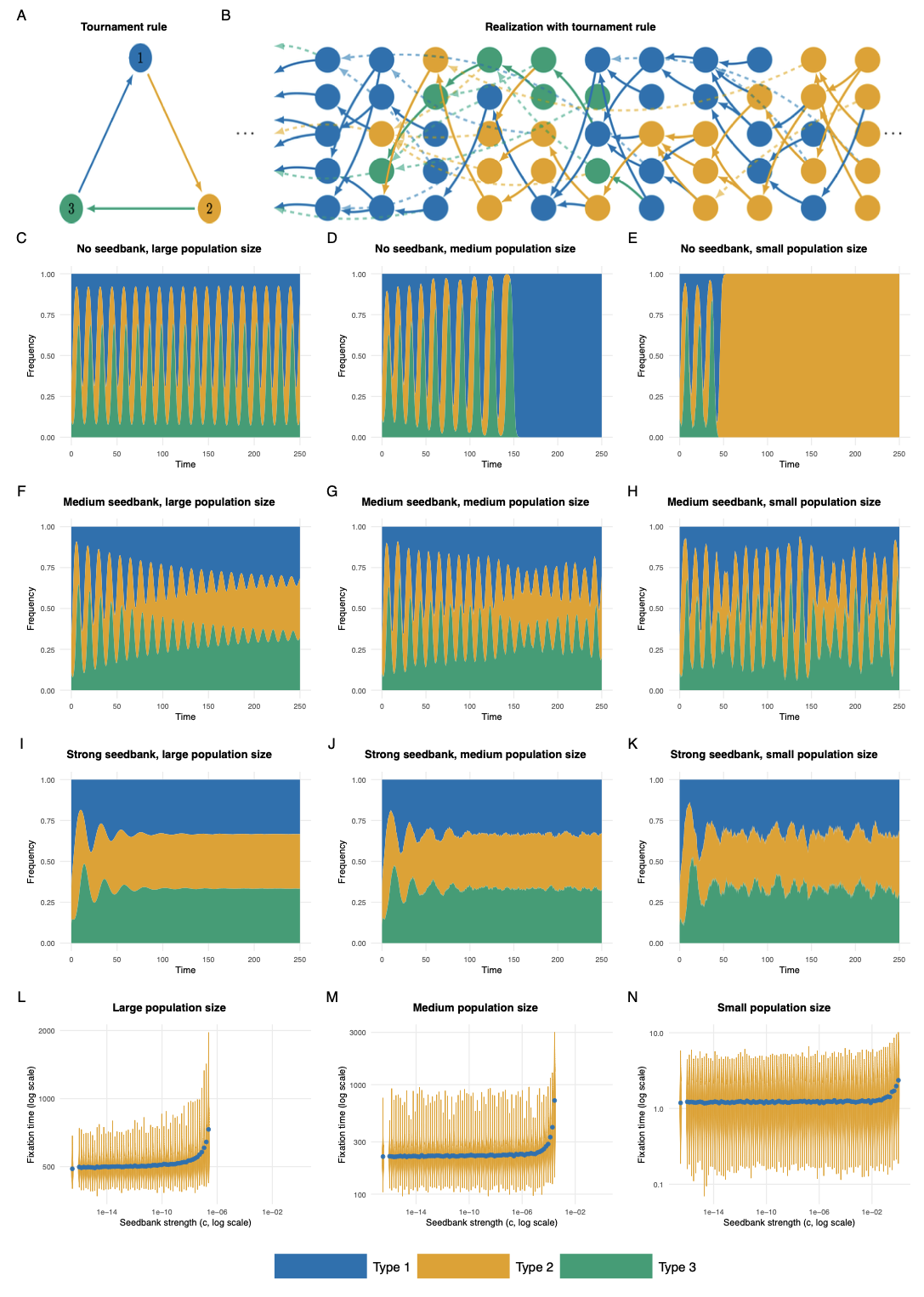}

        \caption{ \small (A) Schematic representation of the tournament rule for three types of individuals. An arrow from $i$ to $j$ (drawn in the color of vertex $j$) indicates that a selective interaction between a type~$i$ and a type~$j$ individual results in an offspring of type~$j$.
(B) Graphical representation of the model with a seed bank, incorporating the tournament rule shown in panel~A, in the case where individuals have one or two potential parents.
(C–K) Simulations of the scaling limit \eqref{sde_limit} for the three type system under varying seed-bank strength $c$ (with $\alpha = \beta = c$) and diffusion intensities $\sigma$. Specifically: no seed bank ($c=0$); medium seed bank ($c=10^{-2}$); strong seed bank ($c=1$). 
Reciprocal of effective population sizes correspond to $\sigma^2=10^{-6}$ (large effective population), $\sigma^2=3\times10^{-4}$ (medium effective population), and $\sigma^2=2\times10^{-3}$ (small effective population).
(L–N) Fixation times from $1{,}000$ simulations for each seed-bank strength $c$ (log–log scale), for the same three values of $\sigma$ as above. Violin plots show the empirical distributions, and dots indicate mean fixation times across replicates. 
}
    \label{fig:FixingTimes}
\end{figure*}

\medskip

First, we consider the classical scenario in which there is no seed bank and the population is large, leading to stable orbits where the most prevalent type changes periodically (Fig.~\ref{fig:FixingTimes}C).  Moving to the right columns in Figure \ref{fig:FixingTimes}, the population size decreases, leading to a stronger random genetic drift. This reveals the fragility of the cyclic dynamics: if, due to a random fluctuation, one type becomes extinct, the system reduces to a two-type model with directional selection, eventually resulting in fixation of a single type. This occurs with a medium population size after approximately 150 generations (Fig.~\ref{fig:FixingTimes}D), and with a small population size after fewer than 50 generations (Fig.~\ref{fig:FixingTimes}E).
\medskip

As we move to the lower rows (Fig.~\ref{fig:FixingTimes}F–K), the model is extended to include a seed bank.
In the middle row (Fig.~\ref{fig:FixingTimes}F–H), the average germination time is moderate, while in the bottom row (Fig.~\ref{fig:FixingTimes}I-K) it is large relative to the noise. The first striking result is that as the strength of the seed bank increases, the system becomes more resilient and fixation no longer occurs in the first 250 generations. To test the hypothesis that stronger seed banks lead to longer fixation times, we simulated fixation times for different seed bank parameters under three population-size regimes. In all cases (Fig.~\ref{fig:FixingTimes}L-N), we observe that seed banks dramatically extend fixation times. 
\medskip

While the previous effect is partly explained by the increase in the effective population size, since a seed bank increases the number of potential parents per individual, 
Figure~\ref{fig:FixingTimes} makes clear that the effect of a strong seed bank is qualitatively different from simply increasing the population size. Observe that in Fig.~\ref{fig:FixingTimes}C, with a large effective population size due to the given population size, we observe orbits.
In contrast, in Fig.~\ref{fig:FixingTimes}K, 
the strong seed bank leads to an increased effective population size, and we observe a stable quasi-equilibrium.
\medskip

For the RPS rule in the absence of a seed bank, the unique Nash equilibrium is $$\left(\frac{1}{3}, \frac{1}{3}, \frac{1}{3}\right)$$ and is not an attractor; instead, the dynamical system exhibits cyclical behavior. Remarkably, the presence of noise and a seed bank drastically alters this behavior, as shown in Figure~1. 

\subsubsection*{Moderate selection and strong coexistence}
In the particular case where the probability that an individual samples more than one potential parent tends to zero slower than the inverse of the total population size (moderate selection), the scaling limit in Theorem \ref{T1} is given by a  system of ordinary differential equations.  
\newpage
More precisely, in this regime one obtains a limiting system of the form 
\begin{equation*}
\begin{split}
x'(t)
&= \mu(x(t))\,
+  \alpha\bigl(y(t)-x(t)\bigr), \\
y'(t)
&=  \beta\bigl(x(t)-y(t)\bigr),
\end{split}
\end{equation*}
where $\mu$ is given in \eqref{mu}. 

For the RPS game, the limiting dynamics are therefore 
\begin{equation*}
\begin{aligned}
x'_1 (t)&= x_1(t)(x_3(t)-x_2(t))+\alpha(y_1(t)-x_1(t)),\\
x'_2(t) &= x_2(t)(x_1(t)-x_3(t))+\alpha(y_2(t)-x_2(t)),\\
x'_3(t) &= x_3(t)(x_2(t)-x_1(t))+\alpha(y_3(t)-x_3(t)),\\
y'_1(t) &= \beta(x_1(t)-y_1(t)),\\
y'_2(t) &= \beta(x_2(t)-y_2(t)),\\
y'_3(t) &= \beta(x_3(t)-y_3(t)),
\end{aligned}
\end{equation*}
where $x_i$ denotes the frequency of type $i$ individuals in the active population, while $y_i$ denotes the corresponding frequency in the dormant population. The parameter $\alpha$ controls the rate at which active individuals become dormant, and $\beta$ the rate at which dormant individuals reactivate. The $x$-subsystem corresponds to the cyclic Lotka-Volterra (replicator) dynamics, while the $y$-variables relax linearly toward $x$. 
\smallskip

When $\alpha=\beta=0$, it is well known that this system arises as the deterministic limit of RPS dynamics and admits a center at $x=(1/3,1/3,1/3)$
with orbits. 
In contrast, when $\alpha,\beta>0$, the system admits a locally asymptotically stable equilibrium $x=y=(1/3,1/3,1/3)$, corresponding to the state in which all types are equally represented in both the active and dormant populations. At this equilibrium, the frequencies remain constant over time. We refer to this phenomenon as \emph{strong coexistence}, which can be caused by  the presence of a seed bank  in a broader class of evolutionary games and not only in the RPS setting.

\subsection{Tournaments}

Selective interactions within a population are often much more complex than a three-type rock-paper-scissors scenario. 
Our hypothesis is that, in the presence of many types, seed banks enhance resilience by maintaining biodiversity and generating an intermittence effect, in which population types temporarily disappear from the active population and reemerge after several generations.
\medskip

To test this hypothesis, now we focus on two tournament interactions that naturally extend the rock-paper-scissors dynamics to many types (see Fig.~\ref{fig:tournaments59}). Details of the tournament rules and simulation setup are provided in the SI, with simulations again based on the scaling limit \eqref{sde_limit}.
\begin{figure*}[!t]
    \centering
\includegraphics[width=12.5cm]{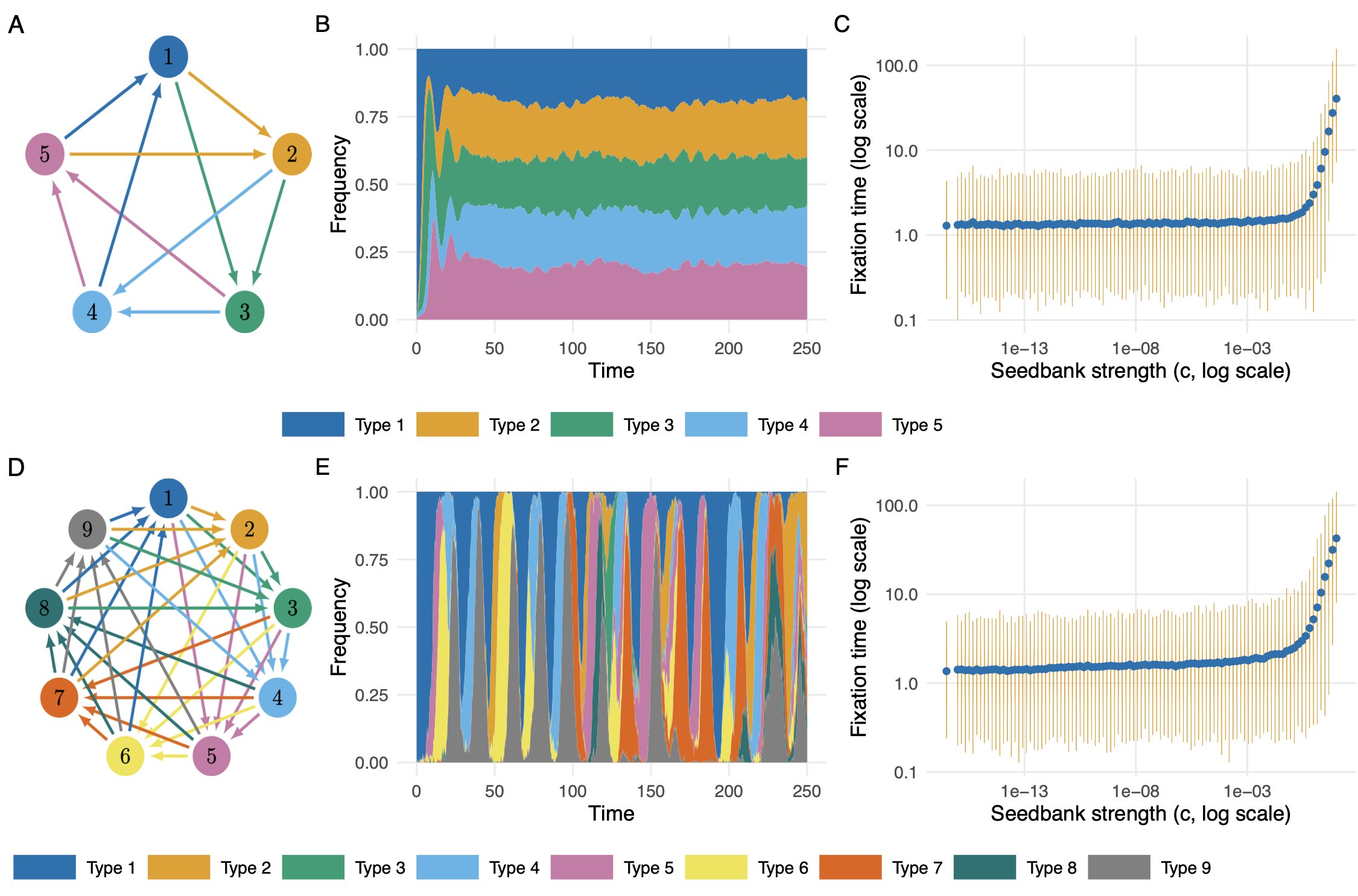}
\caption{
(A,D) Schematic representations of the tournament rule for five and nine types of individuals, respectively. 
(B,E) Simulations of the scaling limit \eqref{sde_limit} for 5- and 9-type systems, respectively. In (B), the initial conditions are 
$x_0 = (1,0,0,0,0)$ and $y_0 = (1/5, 1/5, 1/5, 1/5, 1/5)$, with seed bank strength $\alpha = \beta = 0.1$ and diffusion intensity $\sigma^2 = 10^{-4}$. 
In (E), $x_0=(1,0,\dots, 0)$ and $y_0 = (1/9, \dots, 1/9)$, with $\alpha = \beta = 0.01$ and $\sigma^2 = 10^{-2}$. 
(C,F) Fixation times from $1{,}000$ simulations for each seed bank strength $\alpha = \beta = c$ (log-log scale), for the 5- and 9-type systems, respectively. Violin plots show the empirical distributions, and dots indicate mean fixation times across replicates. 
}
    \label{fig:tournaments59}
\end{figure*}
\medskip

We first consider a population consisting of five types, where each type has a selective advantage over two types and a selective disadvantage against the remaining two (Fig.~\ref{fig:tournaments59}A).  It is shown that the presence of a seed bank leads to strong coexistence, with the frequency of each type fluctuating mildly around $1/5$ (Fig.~\ref{fig:tournaments59}B). This sharply contrasts with the case without a seed bank, where orbits emerge, frequencies deviate substantially from $1/5$, and stochastic fluctuations eventually drive one type to extinction, resulting in an irreversible loss of biodiversity (in analogy with Fig.~\ref{fig:FixingTimes}D for the three type system).
\medskip

Next we consider a population of nine types, where each type has a selective advantage against four types and a disadvantage against the remaining four (Fig.~\ref{fig:tournaments59}D). While the seed bank again promotes biodiversity, the qualitative behavior of the frequencies is markedly different. We observe large fluctuations, including periods during which a type disappears from the active population and subsequently reemerges after several generations (Fig.~\ref{fig:tournaments59}E).  
\medskip

In both cases, increasing the strength of the seed bank suppresses extinction events and thus enhances biodiversity. As in the RPS model, we illustrate this by showing that stronger seed banks lead to longer fixation times (Fig.~\ref{fig:tournaments59}C,F). 
\medskip

This analysis provides a clear visualization of how maintaining a seed bank leads to cryptic variation at the ecosystem level: the effective biodiversity is substantially larger than the observable biodiversity at any given point in time.  Together, these results demonstrate that dormancy not only preserves genetic diversity, but also fundamentally reshapes the outcomes of evolutionary games and the structure of ecological communities.

\subsection{Majority Voting}

The presence of a seed bank can also fundamentally change the behavior of competing populations under frequency-dependent selection. To illustrate this phenomenon, we study the scaling limit of our model under the majority voting rule; that is, when the type of an individual is assigned as the most common type in its potential parents pool. We make a connection to the concept of hybrid zones in nature \cite{barton1989adaptation} and the celebrated Allen-Cahn equation \cite{allen1979microscopic}.
\medskip

Etheridge et al.\ \cite{etheridge2017branching, etheridge2022wide, etheridge2024looking} considered a spatial genetic model in which each individual carries two alleles, each of type $a$ or $A$. Homozygotes (types $aa$ or $AA$) are assumed to be equally fit and selectively advantageous relative to heterozygotes (type $aA$). Under appropriate initial conditions and weak selection, on large spatial and temporal scales, the proportion $u(x,t)$ of $a$-alleles satisfies the Allen-Cahn equation
\begin{equation}\label{AC}
\frac{\partial u}{\partial t} = \Delta u + s\,u(1-u)(2u-1),    
\end{equation}
where $s>0$ is a scaled selection coefficient. In \cite{etheridge2017branching} it is shown that the process $u(x,t)$ can be constructed as the dual of a branching Brownian motion with majority rule as its coloring mechanism. In this setting,  a hybrid zone is a region of space in which both alleles are present in substantial numbers. For example, starting from an initial configuration in which the two different homozygote types occupy disjoint regions, a hybrid zone can be formed eventually through mating of the originally isolated populations.  Under suitable initial conditions, a diffusive limit of \eqref{AC} leads to the indicator function of an hybrid zone  whose boundary evolves according to mean curvature flow (see Theorem 1.3 \cite{etheridge2017branching}). 
\medskip

Under the majority voting rule \eqref{majority}, the scaling limit of our model leads to a non-spatial frequency process $u(t)$ satisfying the system of equations: 
{\small\begin{align}\label{ACSB}
    \mathrm du(t) &= s u(t)(1-u(t))(2u(t) - 1) \mathrm dt + \sqrt{\sigma u(t)(1-u(t))}\mathrm dB_t \notag
    \\[.25cm]
    & \quad + s \alpha  \left(y(t) - u(t) \right)\mathrm dt,  \notag \\ \notag
    \\
    \mathrm dy(t)&= s \beta \left(u(t) - y(t) \right)\mathrm dt,
\end{align}}

where the first term describes selection induced by the majority voting rule, the second term captures stochasticity, and the rightmost term in the first line, along with the second-line equation, accounts for seed bank effects. (see Proposition 2 in SI).
We refer to \eqref{ACSB} as the non-spatial Allen-Cahn equation {with seed bank. 
\smallskip

In this context, we define a \emph{temporal hybrid zone} as the transient interval during which the population remains effectively polymorphic. Specifically, it represents the duration of time that allelic frequencies persist at intermediate values (segregating between the absorbing boundaries of 0 and 1) before the system is driven back to the apparent fixation of either type.
\medskip

Simulations of the scaling limit \eqref{ACSB} provide remarkable insights: In the absence of a seed bank, any solution starting in $(0,1)$ but above $1/2$ typically converges to $1$, while solutions starting below $1/2$ tend to converge to $0$. The initial condition $1/2$ is stationary. The inclusion of a seed bank changes the picture completely. When the process starts from an initial condition in which both types are present, coexistence emerges. Moreover, this coexistence appears in two distinct regimes. In the case of a weak seed bank (Fig.~\ref{fig:AllenCahn}A), long periods during which the population is almost completely dominated by type $A$ are followed by periods dominated by type $a$, interspersed with short phases of hybrid over-representation that resemble the hybrid zones observed in the spatial Allen-Cahn model (see SI). On the other hand, when the seed bank is very strong, the system converges to a stable coexistence equilibrium, and the switching behavior disappears (Fig.~\ref{fig:AllenCahn}D). This is in line with the strong coexistence effect discussed before.   
\medskip

Interestingly, the discrete model exhibits an additional phenomenon: long periods of  orbital behavior can abruptly end in rapid fixation of one of the two types. This phenomenon raises interesting ecological and mathematical questions, which will be the subject of future research.
\begin{figure*}[!t]
    \centering
   \includegraphics[width=10cm]{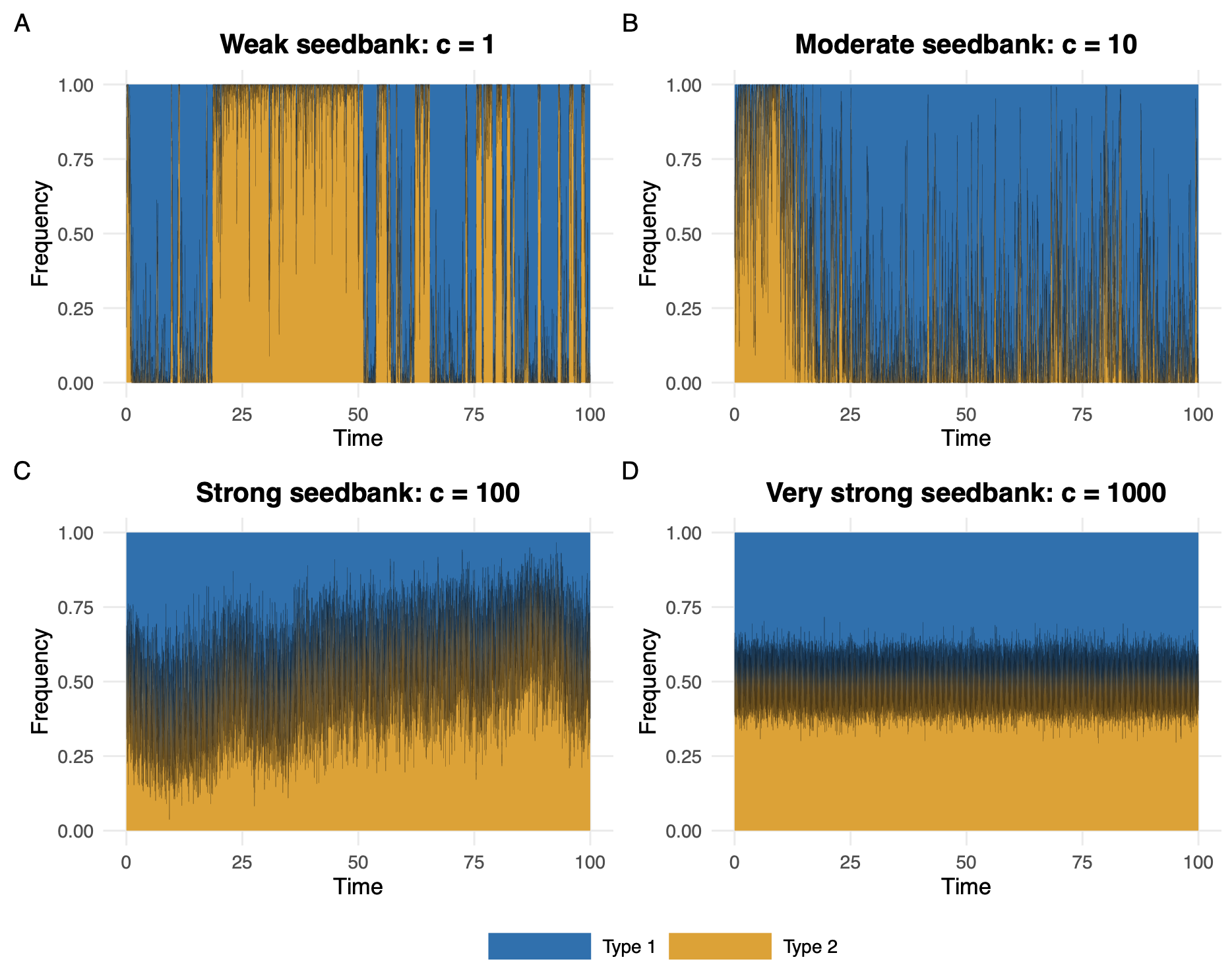}
    \caption{Effect of the seed bank strength ($\alpha = \beta = c$) on the Allen–Cahn equation with seed bank \eqref{ACSB}. 
    Each panel shows trajectories of the two-type system under increasing coupling between the active population and the seed bank. Larger $c$ strengthens memory effects, slows transitions, and stabilizes the interface between types. 
    } 
    \label{fig:AllenCahn}
\end{figure*}

\section{Discussion}

Memory is intrinsic to many evolutionary systems: it allows populations to retain and reintroduce past states, expanding the range of possible evolutionary outcomes. These effects are particularly relevant in systems governed by frequency-dependent interactions, which are naturally described by evolutionary game theory. This framework captures a wide range of ecological interactions, including antagonistic, predator-prey, and mutualistic dynamics \cite{smith1973logic, dugatkin1998gametheory, mcnamara2020gametheory, axelrod1985evolution, brown1987predator, gokhale2012mutualism}. A canonical example arises in the lizard \emph{Uta stansburiana}, where three mating strategies interact through a rock-paper-scissors structure and produce persistent population cycles \cite{sinervo1996rock}. More generally, such interaction networks can in principle maintain diversity, but often do not sustain coexistence in well-mixed populations, where stochastic fluctuations can eliminate types and lead to collapse to simpler competitive hierarchies \cite{Reichenbach2006, reichenbach2007}.
\smallskip

Dormancy provides a natural form of memory in biological systems by allowing a fraction of individuals to enter a reversible dormant state and reemerge after many generations \cite{jones2010dormancy,levin1990seed,  locey2017microscale, shoemaker2017evolution}.
This mechanism is widespread, including seeds in plants \cite{levin1990seed, Kim2026}, 
endospores in bacteria \cite{setlow2014spore,plante2023spore}, and persister-like states in cancer cells \cite{klein2011framework, friberg2015cancer, phan2020dormant}.
By storing and reintroducing past population states, dormancy creates a temporal reservoir of diversity that can influence eco-evolutionary dynamics.
\smallskip

Our results show that this temporal memory has a direct ecological consequence: it can stabilize competitive systems in the absence of spatial structure. Complex selective interactions 
and spatial refuges 
are known to maintain diversity by protecting rare types in different regions of space. In contrast, seed banks provide an analogous mechanism in time. By allowing individuals to exit and later re-enter the active population, dormancy acts as a temporal refuge that preserves lineages through periods of unfavorable ecological conditions.
\smallskip

For instance, this mechanism qualitatively alters the behavior of classical interaction structures such as rock-paper-scissors dynamics. In well-mixed populations, these systems are vulnerable to stochastic extinction and loss of diversity. In our model, dormancy suppresses these extinction events and sustains coexistence. In the deterministic limit, the dynamics admit a stable polymorphic equilibrium, a phenomenon we refer to as strong coexistence, which cannot be explained solely by an increase in effective population size.
\medskip

More broadly, dormancy generates what can be described as cryptic biodiversity: lineages may disappear from the active population and later reemerge from the seed bank, so that the effective diversity of the system exceeds what is observable at any given time. In our simulations, types repeatedly reach zero frequency yet reappear when ecological conditions become favorable. This mechanism provides a natural explanation for how diversity can be maintained in well-mixed systems where no spatial refuges are available.
\smallskip

We find that dormancy can qualitatively transform frequency-dependent selection beyond cyclic interactions. In the case of majority voting dynamics, the presence of a seed bank produces rich temporal behavior even in the absence of space, including regimes of alternating dominance and stable coexistence. In analogy with spatial models such as the Allen-Cahn equation \cite{allen1979microscopic,barton1989adaptation, etheridge2017branching}, this suggests the existence of temporal analogues of hybrid zones, in which diversity is maintained through fluctuations in time rather than space.
\smallskip

Dormancy is typically interpreted as a response to externally imposed fluctuations, such as intermittent stress or antibiotic exposure 
\cite{plante2023spore,balaban2004,phan2020dormant, setlow2014spore, kussell2005}. 
Our results provide a new perspective on this role by showing that similar selective pressures can arise endogenously through ecological interactions. Frequency-dependent interactions generate effective fluctuations in growth and mortality, creating conditions under which dormancy, often referred to as persistence in microbial systems, can promote lineage survival. By introducing a temporal dimension to evolutionary dynamics with the inclusion of seed banks, dormancy qualitatively alter the behavior of competitive systems, stabilizing interaction networks that would otherwise collapse to fixation under stochastic fluctuations. This suggests that the stabilizing effects of dormancy on eco-evolutionary dynamics could be favored by selection.

\bibliographystyle{abbrvnat}
\bibliography{references.bib}


\clearpage 

\onecolumn 

\appendix 
\setcounter{section}{0} 
\setcounter{figure}{0}  
\setcounter{table}{0}   
\setcounter{equation}{0} 

\renewcommand{\thesection}{S\arabic{section}}
\renewcommand{\thefigure}{S\arabic{figure}}
\renewcommand{\thetable}{S\arabic{table}}
\renewcommand{\theequation}{S\arabic{equation}}

\section*{SUPPLEMENTARY INFORMATION}

\section{Wright-Fisher model with complex interactions and a seed-bank}

Building on previous models with complex ecological interactions and seed banks \cite{kaj2001coalescent, blath2016new, gonzalez2018duality, casanova2020lambda, CorderoHummelSchertzer2022}, we introduce a novel Wright–Fisher model incorporating both complex interactions and a seed bank. 
\medskip

Consider a discrete-time population model of fixed size \( N \), consisting of individuals of \( K \) possible types. At each generation, each individual samples a random collection of potential parents, possibly from all previous generations, and is assigned a type according to the observed types and a prescribed \emph{coloring rule}. 

\medskip

In the following, \( [N] \) denotes the set of individual labels \( \{1, \dots, N\} \); \( [K] \) denotes the type space \( \{1, \dots, K\} \); and \( [K]^m = \{(k_1, \dots, k_m) : k_i \in [K]\} \) denotes the set of $m$-dimensional pools of types. The sampling sets for the type of an individual is then represented by
\[
\mathcal{C} = \bigcup_{m \in \mathbb{N}^*} [K]^m.
\]
In other words $\mathcal{C}$ represents the possible sets of potential parents with their type that can be sampled by an individual.
\medskip

Building on \cite{gonzalez2018duality,casanova2020lambda,CorderoHummelSchertzer2022}, a \emph{coloring rule} is a family of probability distributions over the type space \( [K] \):
\[
    C_N = \{ c_z^N(\cdot) \}_{z \in \mathcal{C}}.
\]
Lastly, the simplex over $K$ types is 
\[
    \Delta_K = \left\{(x_1, \dots, x_K) \in \mathbb{R}^K : x_i \ge 0,\ \sum_{i=1}^K x_i = 1\right\}.
\]
For each generation $n$ and individual \( v \), we impose the following reproduction mechanism:
\begin{enumerate}
    \item The number of potential parents sampled for \(v\), denoted \(K_v(n)\), is distributed according to \(Q_N\); that is,
    \[
        \mathbb{P}\big(K_v(n)=k\big) = Q_N(k), \quad k \in [N].
    \]

    \item  For each of the \( K_v(n) \) potential parents, the generation from which it originates is given by
    $$\begin{cases}
      n-1 & \text{with probability}\  r_N 
      \\
      \\ n  - 1 -j \ & \text{with probability}\ (1-r_N)(1-q_N)^{j-1}q_N \ , \ j\geq 1
    \end{cases}$$
    Conditional on the sampled generation, the potential parent is chosen uniformly among all individuals in that generation.

    \item The previous steps define a parent pool \( z \in \mathcal{C} \), and $v$ is assigned a type according to the distribution $c_z^N$:
\[
\mathbb{P}(\text{type}(v)=i \mid z) = c_z^N(i), \quad i \in [K].
\]
\end{enumerate}

The resulting system is formally described by the vector of type frequencies of the population. For each generation \( n \in \mathbb{N} \), let
\begin{equation}\label{X}
    X_i^N(n) := \frac{\#\{v \in V_n : \operatorname{type}(v) = i\}}{N},
\end{equation}
denote the frequency of individuals of type \( i \), where \( V_n \) is the set of individuals at generation \( n \). The frequency vector of types at generation $n$ is then
\begin{equation*}
      X^N(n) = \big(X_1^N(n), \dots, X_K^N(n)\big).
\end{equation*}
Remarkably, the process $(X^N(n))_{n \ge 0}$ is no longer Markovian. In order to restore Markovianity and obtain a tractable description of the population dynamics, we proceed as follows. 
\medskip

Let $(V_{j}^N(n))_{j = 1}^N$ be random variables with values in $[K]$, where $V_j^N(n)$ denotes the type of the $j$-th individual in generation $n$. In addition, define the auxiliary $\Delta_K$-valued process $(Y^N(n))_{n \in \mathbb{Z}}$ by
\begin{equation}\label{Y}
Y_i^N(n) := q_N \sum_{j=1}^{\infty} (1-q_N)^{j-1} X_i^N(n - j), \quad i \in [K],\ n \in \mathbb{Z},    
\end{equation}
which tracks the proportion of individuals of each type in past generations, weighted geometrically. The evolution of \[Y^N = (Y_1^N(n),\dots,Y_K^N(n))\] can be written as
\[
    Y^N(n + 1) = q_N X^N(n) + (1-q_N) Y^N(n), \qquad n \in \mathbb{Z}.
\]

Moreover, define $p^N : \Delta_K \to \Delta_K$ by
\begin{equation} \label{eq:defPi}
    p_i^N(z) = \sum_{n = 1}^\infty Q_N(n) \sum_{\bm{j} = (j_1,\dots,j_n) \in [K]^n}
    \prod_{l = 1}^n z_{j_l} \, c_{\bm{j}}^N(i).
\end{equation}
Recall that $Q_N(n)$ is the probability of having $n$ potential parents, and $z_{j_l}$ is the probability that a potential parent is of type $j_l$. Then, for each $j \in [N]$,
\[
    \mathbb{P}(V_{j}^N(n + 1) = i \mid X^N(n) = x, Y^N(n) = y) = p_i^N(r_N x + (1 - r_N) y).
\]

Let \( M^{(N)}(n+1) \in \mathbb{N}^d \) be the random vector representing the population composition at generation \( n+1 \). Its \( i \)-th component, denoted \( M^{(N)}_i(n+1) \), is the number of individuals of type \( i \) in generation \( n+1 \), for each \( i = 1, \dots, d \). Note that
\[
M^{(N)}(n+1) = \sum_{j = 1}^N e_{V_j^N(n + 1)}.
\]
Therefore, conditional on $\{X^N(n) = x, Y^N(n) = y\}$, 
\begin{equation}\label{multinomial}
    M^{(N)}(n+1) \sim \operatorname{Multinomial}\bigl(N,\, p^N(r_N x + (1 - r_N) y)\bigr).
\end{equation}

Hence, the frequency vector of types at time $n + 1$ is equivalently given by
\begin{equation*}
X^N(n  + 1) = \frac{M^{(N)}(n  + 1)}{N}.    
\end{equation*}

\subsection*{Scaling Limit}

In this section, we establish conditions ensuring that, after rescaling time and taking the large population limit $(N\to \infty)$,  the vector of frequencies $X^N$ converges to a limit with continuous sample paths. We work under the main assumption that, as the population size grows, the expected size of the potential parent pool decreases, with
\[
    1 - Q_N(\{1\}) =: \rho_N \to 0, \qquad N \to \infty.
\]
The main convergence result, with time rescaled by a factor proportional to \(\rho_N^{-1}\), is stated below.


\begin{theorem}\label{T1SI}
	Let $(C_N)_{N\geq 1}$ be a sequence of coloring rules over $[K]$. 
    For $N\in \mathbb{N}$, consider the frequency process $((X^{N},Y^N))_{N\geq1}$ defined in \eqref{X} and \eqref{Y}  with parameters $(Q_N, q_N, C_N)$. 
    Assume that there exist $\sigma\geq0$, and $\alpha,\beta\geq0$ such that
	\begin{itemize}
		\item[(i)] $\lim_{N\to\infty}\rho_N^{-1}(p_i^N(x)-x_i)=\mu_i(x)\in\mathbb{R}$ for all $i\in[K]$, $x\in\Delta_K$.
		\item[(ii)] $\lim_{N\to\infty} 1/(N\rho_N)=\sigma<\infty$.
		\item[(iii)] $(1-r_N)/\rho_N\to \alpha$ and $q_N/\rho_N\to \beta$.
            \item[(iv)] $\lim_{N \to \infty} \rho_N^{-1} (1 - c_i^N(i)) = 0$ for each $i \in [K]$ and $\sup_{N \geq 1} \sum_{k = 2}^\infty k Q_N(\{k\}) / \rho_N < \infty$.
	\end{itemize}
    Then the sequence $ \left\{\left(X^N( \lfloor t/\rho_N \rfloor ), Y^N( \lfloor t/\rho_N \rfloor )\right) : N \in \mathbb{N} \right\}$
    converges weakly in the space of càdlàg functions $\mathbb{D}(\mathbb{R}_+,\Delta_K \times \Delta_K)$, equipped with the Skorokhod $J_1$ topology, to the unique strong solution $(X(t),Y(t))$ to \eqref{SDEXY}. 
\end{theorem}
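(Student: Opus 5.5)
The argument is the standard route for Markov chain diffusion approximations: convergence of generators plus tightness, followed by identification of the limit through well-posedness of \eqref{SDEXY}. The pair $(X^N(n),Y^N(n))$ is a time-homogeneous Markov chain on $\Delta_K\times\Delta_K$. We write $\mathcal{A}_N f(x,y)=\rho_N^{-1}\,\mathbb{E}[f(X^N(n+1),Y^N(n+1))-f(x,y)\mid X^N(n)=x,Y^N(n)=y]$ for $f\in C^3(\mathbb{R}^{2K})$. The aim is to show $\sup_{(x,y)\in\Delta_K^2}|\mathcal{A}_Nf(x,y)-\mathcal{A}f(x,y)|\to0$, where
\[
\mathcal{A}f=\sum_i\bigl(\mu_i(x)+\alpha(y_i-x_i)\bigr)\partial_{x_i}f+\beta\sum_i(x_i-y_i)\partial_{y_i}f+\frac{\sigma}{2}\sum_{i,j}x_i(\delta_{ij}-x_j)\partial_{x_ix_j}f .
\]
The first step is to check that $\zeta\zeta^T(x)=\bigl(x_i(\delta_{ij}-x_j)\bigr)_{ij}$ on $\Delta_K$. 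This is the usual Cholesky factor of the multinomial covariance, so $\sqrt{\sigma}\,\zeta\,dB$ produces this diffusion matrix. Weak convergence then follows from Ethier--Kurtz (Theorem 4.8.2 / Corollary 4.8.9 type), because the state space is compact and the martingale problem for $\mathcal{A}$ is well-posed. We take well-posedness from existence and uniqueness of strong solutions to \eqref{SDEXY}, or equivalently from the known well-posedness of the Wright--Fisher diffusion with Lipschitz drift and bounded linear coupling.

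To compute the first moments, set $z=r_Nx+(1-r_N)y$. Conditionally on $(x,y)$ we have $\mathbb{E}[X^N(n+1)]=p^N(z)$, so
\[
\rho_N^{-1}\bigl(\mathbb{E}X_i^N(n+1)-x_i\bigr)=\rho_N^{-1}\bigl(p^N_i(z)-z_i\bigr)+\frac{1-r_N}{\rho_N}(y_i-x_i).
\]
By (iii) the second term tends to $\alpha(y_i-x_i)$. For the first term we use (i) at the point $z$, together with $z-x=O(\rho_N)$ and a Lipschitz estimate on $\rho_N^{-1}(p^N-\mathrm{id})$. Writing $p^N_i(z)-z_i=\sum_{m\ge2}Q_N(m)\sum_{\mathbf j}\prod z_{j_\ell}c^N_{\mathbf j}(i)-\rho_Nz_i$, its $z$-gradient is bounded by $C\sum_kkQ_N(k)+C\rho_N=O(\rho_N)$ by (iv). Hence the family $\rho_N^{-1}(p^N-\mathrm{id})$ is equi-Lipschitz on $\Delta_K$, and pointwise convergence upgrades to uniform convergence by Arzel\`a--Ascoli. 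For $Y$ the update is deterministic: $Y^N(n+1)-y=q_N(x-y)$, which after division by $\rho_N$ gives the drift $\beta(x-y)$.

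For the second and higher moments, the multinomial covariance gives $\rho_N^{-1}\mathrm{Cov}(X_i,X_j)=(N\rho_N)^{-1}p_i(\delta_{ij}-p_j)\to\sigma x_i(\delta_{ij}-x_j)$ uniformly, using $p^N(z)-x=O(\rho_N)$. The squared mean increment contributes $O(\rho_N)$ after dividing by $\rho_N$. Third absolute moments of the multinomial fluctuation are $O(N^{-3/2})$; after dividing by $\rho_N$ this is $O\bigl(N^{-1/2}(N\rho_N)^{-1}\bigr)\to0$, using (ii). The $Y$-increments are $O(\rho_N)$, so all their mixed second-order terms vanish. A Taylor expansion of $f$ to third order then gives the generator convergence.

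I expect the main obstacle to be uniqueness for the martingale problem. The diffusion coefficient $\zeta$ is only H\"older-$1/2$ near the boundary of the simplex, so uniqueness of strong solutions does not follow from Lipschitz theory. I would first try a Yamada--Watanabe argument componentwise: the drift $\mu$ is polynomial-like and Lipschitz by the equicontinuity above, and the coupling term is linear. If that is awkward in $K$ dimensions, the fallback is to invoke uniqueness for multi-type Wright--Fisher diffusions with Lipschitz drift (Ethier's result for the $K$-allele generator), treating $Y$ as a Lipschitz input. The case $\sigma=0$ reduces to an ODE with Lipschitz vector field, which is immediate.
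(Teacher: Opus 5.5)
Your proposal is correct and follows essentially the paper's route: convergence of the rescaled generators combined with well-posedness of \eqref{SDEXY}. The paper Poissonizes the chain at rate $1/\rho_N$ and appeals to Kallenberg's Theorems 17.25 and 17.28 rather than the discrete-time Ethier--Kurtz corollary, and it takes well-posedness from Proposition~\ref{P1} (pathwise uniqueness for the cumulative sums $U_i=\sum_{j\le i}X_j$ via a multidimensional Yamada--Watanabe theorem, i.e.\ essentially your first option). Your equi-Lipschitz/Arzel\`a--Ascoli step and your third-moment control of the Taylor remainder make explicit the uniform generator convergence that the paper only asserts by appeal to compactness; the estimate in the proof of Lemma~\ref{lem1} supplies exactly this, and it also covers the single-parent term $Q_N(1)\sum_j z_j\bigl(c^N_j(i)-\delta_{ij}\bigr)=o(\rho_N)$ that your formula for $p^N_i(z)-z_i$ silently drops (your gradient bound should also read $\sum_{k\ge2}kQ_N(k)$).
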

Before presenting the proof of Theorem \ref{T1}, we first establish the following auxiliary results.

\begin{lemma} \label{lem1}
    Under the hypotheses of Theorem \ref{T1}, for each $i \in [K]$, $\mu_i$ is Lipschitz with constant 
    \begin{equation} \label{eq:lipsConst} 
        L := 1 + \frac{1}{2} \sup_{N \geq 1} \sum_{k = 2}^\infty k \frac{Q_N(\{k\})}{\rho_N} . 
    \end{equation}
\end{lemma}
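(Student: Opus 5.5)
The plan is to prove the Lipschitz bound for each $\mu_i^N(x) := \rho_N^{-1}(p_i^N(x) - x_i)$ with a constant independent of $N$, and then pass to the pointwise limit using hypothesis (i). Pointwise limits of functions that are uniformly $L$-Lipschitz are again $L$-Lipschitz, so this suffices. Since $Q_N(1)=1-\rho_N$ and $c^N_{(j)}(i)=\mathbf 1_{\{j=i\}}$ (a single potential parent is copied), the $m=1$ term of $p_i^N(x)$ is $(1-\rho_N)x_i$. This gives the decomposition
\[
\mu_i^N(x) = -x_i + \sum_{m\ge 2}\frac{Q_N(m)}{\rho_N}\, g_{m,i}(x), \qquad g_{m,i}(x):=\sum_{\mathbf j\in[K]^m}\Big(\prod_{\ell=1}^m x_{j_\ell}\Big)c^N_{\mathbf j}(i).
\]
The term $-x_i$ contributes Lipschitz constant $1$. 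This accounts for the leading $1$ in $L$, and it suggests that the natural norm is the $\ell^1$ norm on $\mathbb R^K$, in which $|x_i-x_i'|\le \|x-x'\|_1$.

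The key step is to bound the variation of $g_{m,i}$ on $\Delta_K$. I will write $g_{m,i}(x)=\mathbb E[c^N_{\mathbf J}(i)]$, where $\mathbf J=(J_1,\dots,J_m)$ are i.i.d. with law $x$, and $0\le c\le 1$. For $x,x'\in\Delta_K$, I will couple the two $m$-tuples coordinatewise with the maximal coupling. Each coordinate then differs with probability $d_{TV}(x,x')=\tfrac12\|x-x'\|_1$. By a union bound, $\mathbf J\neq\mathbf J'$ with probability at most $\tfrac m2\|x-x'\|_1$. Because $c\in[0,1]$, this gives
\[
|g_{m,i}(x)-g_{m,i}(x')|\le \tfrac m2 \|x-x'\|_1 .
\]
The same bound can be obtained algebraically by telescoping the product $\prod_\ell x_{j_\ell}-\prod_\ell x'_{j_\ell}$ one factor at a time. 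Summing over $m$ gives
\[
|\mu_i^N(x)-\mu_i^N(x')|\le \Big(1+\tfrac12\sum_{m\ge2} m\,\tfrac{Q_N(m)}{\rho_N}\Big)\|x-x'\|_1\le L\|x-x'\|_1 ,
\]
and $L$ is finite by hypothesis (iv).

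The main subtlety is not the estimate itself but the setting in which it holds. The bound is valid on $\Delta_K$ with the $\ell^1$ norm, which is what the coupling argument requires. Since $\mu$ is stated as a map $\mathbb R^K\to\mathbb R^K$, it may need to be extended off the simplex, for example by composing with a Lipschitz retraction onto $\Delta_K$ or by using the polynomial extension on a neighbourhood. Alternatively, one can note that only the values of $\mu$ on $\Delta_K$ matter for the SDE, since both $X$ and $Y$ stay in the simplex. A change to another norm only changes $L$ by a dimensional factor.

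Finally, I will take $N\to\infty$ pointwise using (i). This yields $|\mu_i(x)-\mu_i(x')|\le L\|x-x'\|_1$ for all $x,x'\in\Delta_K$, which is the claimed bound.
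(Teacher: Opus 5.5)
Your argument follows the paper's route: the $-x_i$ term supplies the leading $1$, a coordinatewise maximal coupling with a union bound gives $\tfrac m2\|x-x'\|_1$ for each $m\ge 2$, and (i) passes the bound to the limit. However, it rests on an assumption that is not among the hypotheses. You take $c^N_{(j)}(i)=\mathbf 1_{\{j=i\}}$, but in the general model the single-parent coloring $c^N_j(\cdot)$ is an arbitrary probability vector. Hypothesis (iv) only requires $\rho_N^{-1}(1-c^N_i(i))\to 0$, and your proof never uses this first half of (iv), which is there precisely for this term. In general the $m=1$ contribution to $p_i^N(x)$ is $Q_N(1)\sum_j c^N_j(i)x_j$, not $(1-\rho_N)x_i$, so your decomposition misses a term:
\[
\mu_i^N(x) = -x_i + \sum_{m\ge 2}\frac{Q_N(m)}{\rho_N}\,g_{m,i}(x) + \frac{Q_N(1)}{\rho_N}\Big(\sum_{j=1}^K c^N_j(i)\,x_j - x_i\Big).
\]
The last term is linear with coefficients as large as $\rho_N^{-1}\max_j(1-c^N_j(j))$, and (iv) controls this only asymptotically. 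So for a fixed $N$ the bound $|\mu_i^N(x)-\mu_i^N(x')|\le L\|x-x'\|_1$ need not hold, and the ``uniformly $L$-Lipschitz'' step does not apply as stated.

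The repair is short. Since $c^N_j(i)\le 1-c^N_j(j)$ for $j\ne i$ and $|c^N_i(i)-1|=1-c^N_i(i)$, the extra term is Lipschitz in $\ell^1$ with constant at most $\delta_N:=\max_j \rho_N^{-1}(1-c^N_j(j))$. Hence $\mu_i^N$ is $(L+\delta_N)$-Lipschitz on $\Delta_K$, with $\delta_N\to 0$ by (iv), and the pointwise limit $\mu_i$ is $L$-Lipschitz. The paper handles this term the same way, only more crudely: it bounds the term additively by $\sum_j\rho_N^{-1}(1-c^N_j(j))$ and lets it vanish as $N\to\infty$. With this addition your proof is complete.
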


\begin{proof}
    First observe that $L$, as defined in \eqref{eq:lipsConst} is finite by assumption (iv) of Theorem \ref{T1}.
    Now fix $i \in [K]$ and $x, y \in \Delta_K$.
    From \eqref{eq:defPi}, for each $N \geq 1$ we get 
    \begin{equation} \label{eq:lemaux1}
    \begin{split}
        \frac{1}{\rho_N} \bigl\lvert (p_i^N(x) - x_i) - (p_i^N(y) - y_i) \bigr\rvert 
        & \leq \frac{Q_N(\{1\})}{\rho_N} \Bigl\lvert \sum_{j = 1}^K c_j^N(i) x_j - x_i - \sum_{j = 1}^K c_j^N(i) y_j + y_i \Bigr\rvert 
        + \sum_{n = 2}^\infty \frac{Q_N(\{n\})}{\rho_N} \lvert x_i - y_i \rvert \\
        & \quad + \sum_{n = 2}^\infty \frac{Q_N(\{n\})}{\rho_N} \Bigl\lvert \sum_{\bm{j} \in [K]^n} \prod_{l = 1}^n x_{jl} c_{\bm{j}}^N(i) - \sum_{\bm{j} \in [K]^n} \prod_{l = 1}^n y_{jl} c_{\bm{j}}^N(i) \Bigr\rvert .
    \end{split}
    \end{equation}
    Observe that because for each $j \in [K]$, $(c_j^N(1), \ldots, c_j^N(K)) \in \Delta_K$ is a probability vector, we obtain 
    \begin{equation} \label{eq:lemaux2}
        \frac{Q_N(\{1\})}{\rho_N} \Bigl\lvert \sum_{j = 1}^K c_j^N(i) x_j - x_i - \sum_{j = 1}^K c_j^N(i) y_j + y_i \Bigr\rvert 
        \leq \sum_{j = 1}^K \frac{1 - c_j^N(j)}{\rho_N} ,
    \end{equation}
    while we directly deduce from $\rho_N = 1 - Q_N(\{1\})$ that 
    \begin{equation} \label{eq:lemaux3}
        \sum_{n = 2}^\infty \frac{Q_N(\{n\})}{\rho_N} \lvert x_i - y_i \rvert = \lvert x_i - y_i \rvert .
    \end{equation}
    With this, we have bounded two of the terms on the right hand side of \eqref{eq:lemaux1}. 
    We will now bound the last term. 

    For the moment fix $N \geq 1$ and $n \geq 1$. 
    Let $J^x = (J_1^x, \ldots, J_n^x)$ and $J^y = (J_1^y, \ldots, J_n^y)$ be random vectors with independent entries such that for each $l \in [n]$ and $j \in [K]$,
    \[
        \mathbb{P}(J_l^x = j) = x_j 
        \quad\text{and}\quad 
        \mathbb{P}(J_l^y = j) = y_j .
    \]
    Considering these random vectors, we can rewrite 
    \[
        \Bigl\lvert \sum_{\bm{j} \in [K]^n} \prod_{l = 1}^n x_{jl} c_{\bm{j}}^N(i) - \sum_{\bm{j} \in [K]^n} \prod_{l = 1}^n y_{jl} c_{\bm{j}}^N(i) \Bigr\rvert
        = \bigl\lvert \mathbb{E}[ c_{J^x}^N(i) - c_{J^y}^N(i) ] \bigr\rvert.
    \]
    Now note that 
    \[
        \bigl\lvert \mathbb{E}[  c_{J^x}^N(i) -  c_{J^y}^N(i) ] \bigr\rvert
        = \bigl\lvert \mathbb{E}[ (c_{J^x}^N(i) - c_{J^y}^N(i)) 1_{\{J^x \neq J^y\}} ] \bigr\rvert
        \leq \mathbb{P}(J^x \neq J^y) 
        \leq \sum_{l = 1}^n \mathbb{P}(J_l^x \neq J_l^y) .
    \]
    By Proposition 4.7 and Remark 4.8 in \cite{levin2017markov}, we can construct $J_l^x$ and $J_l^y$ such that 
    \[
        \mathbb{P}(J_l^x \neq J_l^y) = \lVert x - y \rVert_{\mathrm{TV}} , 
    \]
    where
    \[
        \lVert x - y \rVert_{\mathrm{TV}} := \sup_{A \subset [K]} \Bigl\lvert \sum_{l \in A} (x_l - y_l) \Bigr\rvert
    \]
    is the total variation distance between $x$ and $y$. 
    In this case, Proposition 4.2 of \cite{levin2017markov} gives us the equality 
    \[
        \lVert x - y \rVert_{\mathrm{TV}} = \frac{1}{2} \sum_{l = 1}^K \lvert x_l - y_l \rvert,
    \]
    which lets us obtain the bound
    \begin{equation} \label{eq:lemaux4}
        \Bigl\lvert \sum_{\bm{j} \in [K]^n} \prod_{l = 1}^n x_{jl} c_{\bm{j}}^N(i) - \sum_{\bm{j} \in [K]^n} \prod_{l = 1}^n y_{jl} c_{\bm{j}}^N(i) \Bigr\rvert
        \leq \frac{1}{2} n \sum_{l = 1}^K \lvert x_l - y_l \rvert 
    \end{equation}
    By \eqref{eq:lemaux1}, \eqref{eq:lemaux2}, \eqref{eq:lemaux3}, and \eqref{eq:lemaux4} we deduce that 
    \begin{align*}
        \frac{1}{\rho_N} \bigl\lvert (p_i^N(x) - x_i) - (p_i^N(y) - y_i) \bigr\rvert 
        & \leq \sum_{j = 1}^K \frac{1 - c_j^N(j)}{\rho_N} + \lvert x_i - y_i \rvert
        + \frac{1}{2} \sum_{n = 2}^\infty n \frac{Q_N(\{n\})}{\rho_N} \sum_{l = 1}^K \lvert x_l - y_l \rvert \\
        & \leq \sum_{j = 1}^K \frac{1 - c_j^N(j)}{\rho_N} + L \sum_{l = 1}^K \lvert x_l - y_l \rvert ,
    \end{align*}
    with $L$ as in \eqref{eq:lipsConst}. 
    By condition (i) of Theorem \ref{T1}, we have that 
    \[
        \lim_{N \to \infty} \frac{1}{\rho_N} \bigl\lvert (p_i^N(x) - x_i) - (p_i^N(y) - y_i) \bigr\rvert
        = \lvert \mu_i(x) - \mu_i(y) \rvert ,
    \]
    and by condition (iv) of Theorem \ref{T1}, 
    \[
        \lim_{N \to \infty} \sum_{j = 1}^K \frac{1 - c_j^N(j)}{\rho_N} = 0.
    \]
    Thus, we deduce
    \[
        \lvert \mu_i(x) - \mu_i(y) \rvert \leq L \sum_{l = 1}^K \lvert x_l - y_l \rvert .
    \]
    That is, $\mu_i$ is Lipschitz with constant $L$. 
\end{proof}

\begin{proposition}\label{P1}
	There exists a unique strong solution to the following SDE 
	\begin{equation} \label{SDEXY}
        \begin{split}
		dX(t) & =  \mu(X(t))dt + \sqrt{\sigma} \zeta(X(t)) dB_t +  \alpha (Y(t)-X(t)) dt \\
		dY(t) & = \beta(X(t)-Y(t)) dt \,,
        \end{split}
	\end{equation}
	where $B$ is a $K$-dimensional Brownian motion, $(e_i : 1 \leq i \leq K)$ is the canonical basis of $\mathbb{R}^K$, 
    and for $x \in \Delta_K$:
    \[
        \zeta_{ij}(x) := \begin{dcases*}
            0 & if $i < j$, \\
            \sqrt{\frac{x_i (1 - \sum_{k = 1}^i x_k)}{1 - \sum_{k = 1}^{i - 1} x_k }} & if $i = j$, \\
            - x_i \sqrt{ \frac{x_j}{ (1 - \sum_{k = 1}^{j - 1} x_k) (1 - \sum_{k = 1}^j x_k) } } & if $i > j$.
        \end{dcases*}
    \]

	Moreover, the solution of \eqref{SDEXY} has a generator $\mathcal{A}$ acting on functions $f\in\mathcal{C}^2(\Delta_K\times\Delta_K)$, by
	\begin{align}\label{gen_lim}
		\mathcal{A}f(x,y) & =  \sum_{i = 1}^K (\mu_i(x) + \alpha(y_i - x_i)) \frac{\partial}{\partial x_i} f(x, y) +  \sum_{i = 1}^K \beta(x_i - y_i) \frac{\partial}{\partial y_i} f(x, y) \notag\\
		                  & \quad + \frac{\sigma}{2} \sum_{i, j = 1}^K (1_{\{i = j\}} - x_j) x_i \frac{\partial^2}{\partial x_i \partial x_j} f(x, y).
	\end{align}
\end{proposition}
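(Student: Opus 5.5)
The plan has four steps: identify the diffusion matrix, prove weak existence, prove pathwise uniqueness, and conclude with Yamada–Watanabe.

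\textbf{Step 1: the diffusion matrix and the generator.} First I would verify the algebraic identity
\[
\bigl(\zeta(x)\zeta(x)^{T}\bigr)_{ij} = x_i\bigl(1_{\{i=j\}} - x_j\bigr), \qquad x \in \Delta_K .
\]
Here $\zeta$ is the lower-triangular, Cholesky-type factor of the Wright–Fisher covariance matrix. The identity can be checked by induction on $i$, using $1-\sum_{k\le i}x_k = \bigl(1-\sum_{k<i}x_k\bigr) - x_i$. Once it holds, Itô's formula applied to $f(X(t),Y(t))$ for $f\in\mathcal C^2(\Delta_K\times\Delta_K)$ gives the generator \eqref{gen_lim} directly. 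The factor $\sigma/2$ comes from $\sqrt\sigma\,\zeta$, and $Y$ contributes no second-order terms because it has no noise. So the generator claim is routine once existence is settled.

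\textbf{Step 2: weak existence in $\Delta_K\times\Delta_K$.} I would extend $\mu$ and $\zeta$ continuously and boundedly to $\mathbb R^K$, for instance by composing with a projection onto $\Delta_K$. Then I would use Skorokhod's weak existence theorem for SDEs with continuous bounded coefficients. To show the solution stays in $\Delta_K\times\Delta_K$, I would check the following.
\begin{itemize}
\item The drift preserves $\sum_i x_i = 1$: $\sum_i \mu_i = 0$ because $p^N(x)\in\Delta_K$. The noise also preserves it, since each column of $\zeta$ sums to zero, as follows from Step 1.
\item At a face $\{x_i=0\}$ the drift points inward, and the noise in coordinate $i$ vanishes. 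Indeed $\mu_i(x)=\lim\rho_N^{-1}p_i^N(x)\ge 0$, and $\alpha(y_i-x_i)=\alpha y_i\ge0$.
\item The $Y$-equation is a linear ODE that relaxes toward $X$, so $Y$ stays in the convex set $\Delta_K$.
\end{itemize}
A standard argument, applying Itô to $x_i^-$ or to a smooth penalization of it, then shows that the solution never leaves the simplex.

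\textbf{Step 3: pathwise uniqueness.} This is the main obstacle. Lemma \ref{lem1} makes $\mu$ Lipschitz, and the terms $\alpha(Y-X)$ and $\beta(X-Y)$ are linear, so the difficulty lies entirely in $\zeta$. It is only $\tfrac12$-Hölder near the boundary, and at faces with $1-\sum_{k<j}x_k=0$ it involves singular-looking ratios. I would exploit the triangular structure. Coordinate $i$ is driven only by $B_1,\dots,B_i$, and its ``new'' noise $B_i$ enters with coefficient $\sqrt{x_i(1-S_i)/(1-S_{i-1})}$, where $S_i=\sum_{k\le i}x_k$.

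Concretely, I would pass to stick-breaking coordinates $U_i = X_i/(1-S_{i-1})$. In these coordinates $U_i$ is driven by $B_i$ alone, with coefficient $\sqrt{\sigma U_i(1-U_i)/(1-S_{i-1})}$.

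Then I would run a Yamada–Watanabe argument for two solutions $(X,Y)$ and $(X',Y')$ with the same $B$, proceeding coordinate by coordinate:
\begin{itemize}
\item Apply Itô to $\phi_n(U_i-U_i')$, with $\phi_n\uparrow|\cdot|$ the usual Yamada–Watanabe functions.
\item Bound the squared diffusion difference by $C\bigl(|U_i-U_i'| + \sum_{k<i}|U_k-U_k'|^2\bigr)$. This uses $(\sqrt a-\sqrt b)^2\le|a-b|$ and the Lipschitz dependence of $1/(1-S_{i-1})$ on earlier coordinates, valid while $1-S_{i-1}$ stays away from $0$.
\item Sum over $i$, add $\sum_i|Y_i-Y_i'|$, and close with Gronwall to get $X=X'$ and $Y=Y'$ up to the stopping time $\tau_\varepsilon=\inf\{t: 1-S_{K-1}(t)\wedge 1-S'_{K-1}(t)<\varepsilon\}$.
\end{itemize}
The delicate point is what happens after $\tau_\varepsilon$, when some $1-S_{i-1}$ hits $0$. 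At that moment all later coordinates $x_i,\dots,x_K$ vanish. I would handle this by restarting the argument with the types relabeled, so that a non-vanishing type comes first. The ordering in $\zeta$ is arbitrary up to an orthogonal change of the driving Brownian motion, so this should be legitimate, and $\varepsilon\downarrow0$ together with a localization in time should give uniqueness for all $t$.

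\textbf{Step 4: conclusion.} With weak existence and pathwise uniqueness in hand, the Yamada–Watanabe theorem gives a unique strong solution of \eqref{SDEXY}.
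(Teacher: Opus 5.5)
\textbf{Step 3 has a genuine gap.} Steps 1, 2 and 4 are sound. Your stick-breaking computation is also right: for $U_i=X_i/(1-S_{i-1})$ the It\^o corrections cancel, and $U_i$ is driven by $B_i$ alone with coefficient $g_i(U)=\sqrt{\sigma U_i(1-U_i)/(1-S_{i-1})}$. The problem is that your bound $|g_i(U)-g_i(U')|^2\le C\bigl(|\Delta_i|+\sum_{k<i}\Delta_k^2\bigr)$, with $\Delta_k=U_k-U_k'$, does not close a Yamada--Watanabe estimate. It\^o's formula for $\phi_n(\Delta_i)$ produces $\tfrac12\int_0^t\phi_n''(\Delta_i)\,d\langle\Delta_i\rangle_s$, where $\phi_n''(z)\le 2/(n|z|)$ on $\{a_n<|z|<a_{n-1}\}$. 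Besides the harmless $Ct/n$, you therefore get $\frac{C}{n}\int_0^t|\Delta_i|^{-1}\sum_{k<i}\Delta_k^2\,\mathbf 1_{\{a_n<|\Delta_i|<a_{n-1}\}}\,ds$. This is only bounded by $C(na_n)^{-1}\int_0^t\sum_{k<i}\Delta_k^2\,ds$, and $na_n\to0$, so Gronwall yields nothing.

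This is structural, not cosmetic. On $\{U_i=U_i'\in(0,1),\ S_{i-1}\neq S_{i-1}'\}$ one has $d\langle\Delta_i\rangle>0$, so nothing forces the local time of $\Delta_i$ at $0$ to vanish, and that vanishing is exactly what the Yamada--Watanabe functions exploit. The argument needs the squared diffusion difference in coordinate $i$ to be controlled by $|\Delta_i|$ alone. For $K\ge3$, the factor $\prod_{k<i}(1-U_k)^{-1/2}$ breaks this. Inducting over $i$ does not help either, because every drift component depends on all coordinates through $\mu$ and $Y$. You would need a genuinely new estimate, for example one bounding that local time by $\sup_{s\le t}\max_{k<i}|\Delta_k(s)|$.

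\textbf{The continuation past $\tau_\varepsilon$ also fails as proposed.} Relabeling the types replaces $\zeta$ by another square root $\tilde\zeta$ of the same matrix, with $\tilde\zeta(x)=\zeta(x)O(x)$ for an orthogonal $O(x)$. For $K\ge3$, $O(x)$ depends on $x$. The new driving motion $\int O(X)^{T}\,dB$ is therefore solution dependent, so two solutions driven by the same $B$ are no longer driven by the same Brownian motion in the relabeled frame. Equivalently, if $\Phi^\pi$ is the relabeled stick-breaking map, the noise of $\Phi^\pi(X)-\Phi^\pi(X')$ is $\bigl(D\Phi^\pi(X)\zeta(X)-D\Phi^\pi(X')\zeta(X')\bigr)\,dB$, which is not diagonal. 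So the pathwise coupling is lost. You would also have to handle infinitely many boundary visits, since with $\alpha>0$ extinct types reemerge from $Y$.

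\textbf{Comparison with the paper.} The paper avoids denominators and relabeling by working with the linear cumulative sums $U_i=\sum_{j\le i}X_j$. Their quadratic variation $\sigma U_i(1-U_i)\,dt$ depends on $U_i$ alone, and the paper then invokes a multidimensional Yamada--Watanabe theorem. Be aware, though, that the Brownian motions $W_i$ in that reduction are built from $B$ and from the solution. Their correlations are state dependent, $d\langle W_i,W_j\rangle=\sqrt{U_i(1-U_j)/(U_j(1-U_i))}\,dt$ for $i<j$. Transferring pathwise uniqueness back to the $B$-driven equation therefore meets the same coupling obstruction as your relabeling.
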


\begin{proof}
    In order to prove the existence of a unique strong solution to \eqref{SDEXY}, we first establish pathwise uniqueness. \\To this end, we define the $[0,1]^K \times [0,1]^K$-valued process $(U, V)$ by 
\[
    U_i(t) := \sum_{j = 1}^i X_j(t), \quad
    V_i(t) := \sum_{j = 1}^i Y_j(t), \quad
    i \in [K], \ t \ge 0.
\]

Let $W = (W_1, \ldots, W_K)$ be a correlated $K$-dimensional Brownian motion, and define
\[
    \Sigma(x) = (x_1, x_2 - x_1, \ldots, x_K - x_{K - 1}), \quad
    \tilde{\mu}(x) = \mu(\Sigma(x)).
\]

Following the proof of Corollary 3.1 in \cite{casanova2020lambda}, we obtain that, for each $i \in [K]$, the process satisfies the SDE
\begin{equation} \label{sde_u_v}
\begin{split}
    dU_i(t) & = 
    \sum_{j = 1}^i \tilde{\mu}_j(U(t)) \, dt 
               + 
                \alpha  (V_i(t) - U_i(t)) \, dt 
               + \sqrt{\sigma U_i(t) (1 - U_i(t))} \, dW_i(t), \\ 
    dV_i(t) & = 
    \beta (U_i(t) - V_i(t)) \, dt.
\end{split}
\end{equation}
    By Lemma \ref{lem1}, we have that, for $(x,y)\in\Delta_K\times\Delta_K$,
    \[
        \lvert \mu_i(x) - \mu_i(y) \rvert \leq C_i \sum_{j = 1}^K \lvert x_j - y_j \rvert \,,
    \]
    where $C_i$ is a finite constant. 
Therefore, there exist constants $K_1,K_2>0$ such that, for $i\in[K]$ and $u,w\in\Delta_K\times\Delta_K$,
\begin{align*}
\lvert \tilde{\mu_i}(u) - \tilde{\mu_i}(w) \rvert \leq K_1 \sum_{j = 1}^K \lvert u_j - w_j \rvert \,,\notag\\
\lvert \sqrt{u_i(1-u_i)} - \sqrt{w_i(1-w_i)} \rvert^2 \leq K_2\lvert u_i-w_i \rvert.
\end{align*}
    Hence, pathwise uniqueness for \eqref{sde_u_v} follows from Theorem 2 in \cite{Graczyk2013Yamada}; see also Section 4.1 in \cite{casanova2024multitype} for a similar result. Since $X_i = U_i - U_{i-1}$ and $Y_i = V_i - V_{i - 1}$, for $i\in[K]$, we obtain pathwise uniqueness for \eqref{SDEXY}. 
    The existence of a strong solution now follows from Theorem 2.3 and Theorem 1.1 in \cite{IkeWat1981}.
\end{proof}

\begin{proof}[Proof of Theorem \ref{T1SI}]
	We consider the process
	\[
		(\mathcal{X}^N, \mathcal{Y}^N) = \Bigl( X^N( \lfloor \tau^N(t)\rfloor ), Y^N( \lfloor \tau^N(t) \rfloor ) \Bigr)_{t \geq 0} \,,
	\]
	where $\tau^N(t)$ is a Poisson process of intensity $1/\rho_N$.
    By the existence and uniqueness of the solution to \eqref{SDEXY}, it is sufficient to establish the $L^1$-convergence of the sequence of generators $\mathcal{A}^N$, associated with $(\mathcal{X}^N, \mathcal{Y}^N)$, when evaluated at functions $f \in \mathcal{C}^2(\Delta_K\times\Delta_K)$, to the generator $\mathcal{A}$ given in \eqref{gen_lim}, associated to the unique strong solution $(X,Y)$ of Proposition \ref{P1} (see Lemma 17.25 in \cite{kallenberg}). 
    Since the state space $[0,1]$ $\Delta_K \times \Delta_K$ is compact, it is enough to verify pointwise convergence. 
    The desired result then follows from Theorems 17.25 and 17.28 in \cite{kallenberg}.
\medskip

For $f\in\mathcal{C}^2(\Delta_K\times\Delta_K)$ and $(x,y)\in\Delta_K\times\Delta_K$, the infinitesimal generator $\mathcal{A}^N$ of the process $(\mathcal{X}^N, \mathcal{Y}^N)$
    is given by 
	\begin{equation}\label{aux_0}
    \begin{split}
		\mathcal{A}^N f(x, y)
		 & := \lim_{t \downarrow 0} \frac{1}{t} \Bigl( \mathbb{E}\bigl[ f\bigl(X^{N}(\lfloor \tau^N(t) \rfloor), Y^{N}(\lfloor \tau^N(t) \rfloor)\bigr) \bigr] - f(x, y) \Bigr)               \\
		 & = 
          \frac{1}{\rho_N}
         \Bigl( \mathbb{E}\Bigl[ f\Bigl( \frac{M^{(N)}(1)}{N}, Y^{N}(1) \Bigr) \Bigr] - f(x, y) \Bigr) .
    \end{split}
	\end{equation}
	Note that by a Taylor expansion of $f$ we have that
	\begin{equation}\label{aux_1}
    \begin{split}
		\MoveEqLeft
		\frac{\mathbb{E}[ f( M^{(N)}(1) / N, Y^N(1) ) ] - f(x, y) }{ \rho_N }                                                                                                                                                       \\
		 & = \sum_{i = 1}^K \frac{\mathbb{E}[ (M^{(N)}(1))_i - N x_i ]}{N \rho_N } \frac{\partial}{\partial x_i} f(x, y)                                                                                                            \\
		 & \quad + \sum_{i = 1}^K \frac{\mathbb{E}[ Y_i^{(N)}(1) - y_i ]}{\rho_N } \frac{\partial}{\partial y_i} f(x, y)                                                                                                               \\
       & \quad {} + \frac{1}{2} \sum_{j = 1}^K \frac{\mathbb{E}[  (Y_j^{N}(1) - y_j)^2 ]}{N \rho_N } \frac{\partial^2}{\partial y_j^2} f(x, y)
                                                            \\
		 & \quad + \frac{1}{2} \sum_{i, j = 1}^K \frac{\mathbb{E}[ ((M^{(N)}(1))_i - N x_i) ((M^{(N)}(1))_j - N x_j) ]}{N^2 \rho_N } \frac{\partial^2}{\partial x_j \partial x_i} f(x, y)                                        \\
		 & \quad + \frac{1}{2} \sum_{i, j = 1}^K \frac{\mathbb{E}[ ((M^{(N)}(1))_i - N x_i) (Y_j^{N}(1) - y_j) ]}{N \rho_N} \frac{\partial^2}{\partial y_j \partial x_i} f(x, y) + o( 1) \,.
    \end{split}
	\end{equation}
	Observe that, by \eqref{multinomial},
	\begin{align*}
		\frac{\mathbb{E}[ (M^{(N)}(1))_i - N x_i ]}{N \rho_N }
		 & = \frac{p_i^N(r_N x + (1 - r_N) y) - x_i}{\rho_N }                                                                           \\
		 & = \frac{p_i^N(r_N x + (1 - r_N) y) -  r_N x_i - (1 - r_N) y_i}{\rho_N } + \frac{1 - r_N}{\rho_N } (y_i - x_i), \quad i\in[K]\,,
	\end{align*}
	and so, using
	\[
		\frac{p_i^N(x) - x_i}{\rho_N} \to \mu_i(x)\,,\ r_N \to 1\,, \frac{1 - r_N}{\rho_N} \to \alpha
		\text{ as } N \to \infty, \quad i\in[K] \,,
	\]
	it follows that
	\begin{equation} \label{aux1}
		\lim_{N \to \infty} \sum_{i = 1}^K \frac{\mathbb{E}[ (M^{(N)}(1))_i - N x_i ]}{N \rho_N } \frac{\partial}{\partial x_i} f(x, y)
		=  \sum_{i = 1}^K \bigl(\mu_i(x) + \alpha (y_i - x_i)\bigr) \frac{\partial}{\partial x_i} f(x, y)\,.
	\end{equation}
	In a similar fashion, using that
	\[
		\frac{q_N}{\rho_N} \to \beta\ \text{ as }\ N \to \infty,
	\]
	and observing that $\mathbb{E}[Y_i^N(1) - y_i] = q_N (x_i - y_i)$, we deduce
	\begin{align} \label{aux_2}
		\lim_{N \to \infty} \sum_{i = 1}^K \frac{\mathbb{E}[ Y_i^{(N)}(1) - y_i ]}{\rho_N } \frac{\partial}{\partial y_i} f(x, y)
		 & =\lim_{N \to \infty} \sum_{i = 1}^K\frac{q_N}{\rho_N}(x_i-y_i)\frac{\partial}{\partial y_i} f(x, y)\notag \\
		 & = 
         \sum_{i = 1}^K \beta (x_i - y_i) \frac{\partial}{\partial y_i} f(x, y) \,.
	\end{align}
	Using the four previous displays, it also becomes clear that
	\begin{align}\label{aux_3}
		\lim_{N \to \infty} & \sum_{i, j = 1}^K \frac{\mathbb{E}[ ((M^{(N)}(1))_i - N x_i) (Y_j^{N}(1) - y_j) ]}{N \rho_N } \frac{\partial^2}{\partial y_j \partial x_i} f(x, y)\notag \\
		                    & =\lim_{N \to \infty} \sum_{i, j = 1}^K(p_i^N(r_N x + (1 - r_N) y) - x_i)\frac{q_N}{\rho_N}(x_j-y_j)\frac{\partial^2}{\partial y_j \partial x_i} f(x, y)=0
		\,
	\end{align}
	and
	\begin{align}\label{aux_4}
		\lim_{N \to \infty} & \sum_{j = 1}^K \frac{\mathbb{E}[  (Y_j^{N}(1) - y_j)^2 ]}{N \rho_N } \frac{\partial^2}{\partial y_j^2} f(x, y) =\lim_{N \to \infty} \sum_{j = 1}^K\frac{q_N^2}{\rho_N}(x_j-y_j)^2\frac{\partial^2}{\partial y_j^2} f(x, y)=0.
		\,
	\end{align}
	For the sum involving the cross terms of $M^{(1, N)}$, let us first observe that, for $i,j\in[K]$,
\begin{align*}
    \MoveEqLeft
    \mathbb{E}\big[ ((M^{(N)}(1))_i - N x_i) ((M^{(N)}(1))_j - N x_j) \big] \\
    & = \operatorname{Cov}\big( (M^{(N)}(1))_i, (M^{(N)}(1))_j \big) 
      + N^2 (p_i^N(r_N x + (1 - r_N) y) - x_i) (p_j^N(r_N x + (1 - r_N) y) - x_j)\,.
\end{align*}
Hence, \eqref{multinomial} yields
\begin{align}\label{cov_M}
    \MoveEqLeft
    \mathbb{E}\big[ ((M^{(N)}(1))_i - N x_i) ((M^{(N)}(1))_j - N x_j) \big] \notag\\
    & = N \big( p_i^N(r_N x + (1 - r_N) y) 1_{\{i = j\}} 
          - p_i^N(r_N x + (1 - r_N) y) p_j^N(r_N x + (1 - r_N) y) \big) \notag\\
    & \quad + N^2 (p_i^N(r_N x + (1 - r_N) y) - x_i) (p_j^N(r_N x + (1 - r_N) y) - x_j),\qquad i,j\in[K]\,.
\end{align}

Using 
\[
    \frac{1}{N \rho_N } \to \sigma \quad \text{and} \quad 
    p_i^N(r_N x + (1 - r_N) y) \to x_i \quad \text{as } N \to \infty,\qquad i\in[K],
\]
we obtain
\begin{align}\label{aux_5}
    & \lim_{N \to \infty} 
      \frac{p_i^N(r_N x + (1 - r_N) y) 1_{\{i = j\}} - p_i^N(r_N x + (1 - r_N) y) p_j^N(r_N x + (1 - r_N) y)}
           {N \rho_N } = \sigma \, x_i (1_{\{i = j\}} - x_j),\qquad i,j\in[K] \,.
\end{align}

On the other hand, by noting that
\[
    \frac{p_i^N(r_N x + (1 - r_N) y) - x_i}{\rho_N} 
    \to \mu_i(x) + \alpha (y_i - x_i) \quad \text{as } N \to \infty,\qquad i\in[K],
\]
we also deduce
\begin{align}\label{aux_6}
    \lim_{N \to \infty} 
    \frac{(p_i^N(r_N x + (1 - r_N) y) - x_i) (p_j^N(r_N x + (1 - r_N) y) - x_j)}
         {\rho_N } = 0,\qquad i,j\in[K].
\end{align}
Hence, using \eqref{cov_M}, \eqref{aux_5}, and \eqref{aux_6} we obtain
\begin{align}\label{aux_new}
 \lim_{N\to\infty}\frac{1}{2}\sum_{i, j = 1}^K \frac{\mathbb{E}[ ((M^{(N)}(1))_i - N x_i) ((M^{(N)}(1))_j - N x_j) ]}{N^2 \rho_N } \frac{\partial^2}{\partial x_j \partial x_i} f(x, y)=\frac{\sigma}{2}\sum_{i, j = 1}^K \, x_i (1_{\{i = j\}} - x_j)\frac{\partial^2}{\partial x_j \partial x_i} f(x, y).
\end{align}
Letting $N\to\infty$ on both sides of \eqref{aux_1} and using the expressions
from \eqref{aux1}-\eqref{aux_4} together with \eqref{aux_new}, we obtain
	\begin{align}\label{non_extreme_conv}
		\lim_{N \to \infty} \frac{\mathbb{E}[ f( M^{(N)}(1) / N, Y^N(1) ) ] - f(x, y) }{ \rho_N }
		 & = \sum_{i = 1}^K (\mu_i(x) + \alpha (y_i - x_i)) \frac{\partial}{\partial x_i} f(x, y) + \sum_{i = 1}^K \beta (x_i - y_i) \frac{\partial}{\partial y_i} f(x, y) \notag\\
		 & \quad + \frac{\sigma}{2} \sum_{i, j = 1}^K (1_{\{i = j\}} - x_j) x_i \frac{\partial^2}{\partial x_i \partial x_j} f(x, y) \,.
	\end{align}
	Finally, using \eqref{aux_0} and \eqref{non_extreme_conv} we deduce
	\begin{equation*}
		\begin{split}
			\lim_{N\to\infty} \mathcal{A}^N f(x,y) 
			& =  \sum_{i = 1}^K (\mu_i(x) + \alpha (y_i - x_i)) \frac{\partial}{\partial x_i} f(x, y) + \sum_{i = 1}^K \beta (x_i - y_i) \frac{\partial}{\partial y_i} f(x, y) \\
			& \quad + \frac{\sigma}{2} \sum_{i, j = 1}^K (1_{\{i = j\}} - x_j) x_i \frac{\partial^2}{\partial x_i \partial x_j} f(x, y) \\
			&=\mathcal{A}f(x,y),
		\end{split}
	\end{equation*}
	 which completes the proof.
\end{proof}

\subsection*{Tournaments}
In this section, we detail the class of coloring rules introduced in the \emph{Tournaments} section of the main text, which generalize the rock–paper–scissors dynamics observed in side-blotched lizards \cite{sinervo1996rock} to more complex selective interactions (see, e.g., \cite{levine2017beyond}).
\medskip

Consider an odd number of types, $K$, and define an \emph{intransitive} relation $\prec$ on $[K]$, meaning that $i \prec j$ and $j \prec k$ do not necessarily imply $i \prec k$. For simplicity, we assume that each individual has at most two potential parents; 
i.e.,
\(
Q_N(1) + Q_N(2) = 1 
\). 
The coloring rule is given by:
\begin{equation}\label{Tournamentrule}
    c_j^N (k)= c_{jj}^N(k) = 1_{\{k=j\}}
\quad \text{and} \quad
c_{ij}^N(k) = c_{ji}^N(k) = 1_{i \prec j}\,1_{\{k=j\}} + 1_{j \prec i}\,1_{\{k=i\}} \, ,
\end{equation}
for $i, j,k \in [K]$ with $i \neq j$. 
Thus, if an individual has one potential parent, or both parents are of the same type, it inherits that type. If it has two potential parents of different types, it inherits the type that is ``stronger'' under $\prec$.  In Figure~\ref{fig:rpsDiagrams} we show diagrams representing different intransitive relations for $K = 3$, $K = 5$, and $K = 9$. For $K = 3$, the relation  $1 \succ 3$, $3 \succ 2$, and $2 \succ 1$, corresponds to the usual rock-paper-scissors interaction.

\begin{figure}[htbp]
\centering
\subfloat[3RPS]{%
    \includegraphics[width=7.5cm]{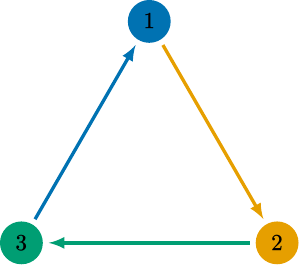}
}
\qquad
\subfloat[5RPS]{%
    \includegraphics[width=7.5cm]{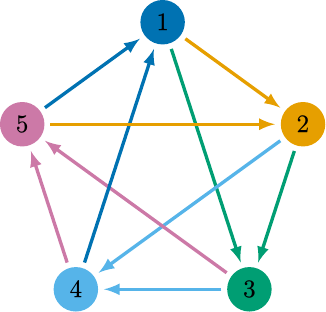}
} \\[2mm]
\subfloat[9RPS]{%
    \includegraphics[width=7.5cm]{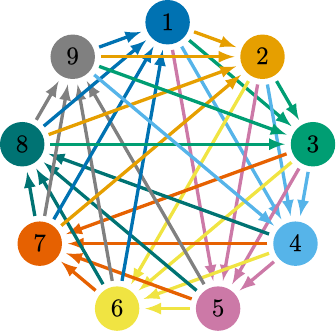}
}
\qquad
\subfloat[Hierarchical RPS]{%
    \includegraphics[width=7.5cm]{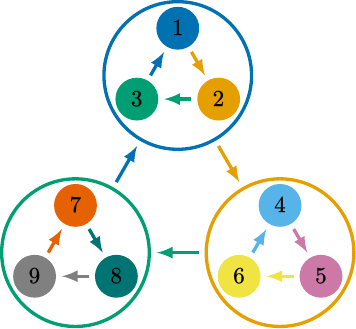}
}
\caption{Diagrams representing different rock-paper-scissors (RPS)–type preference relations. 
Vertices correspond to types, and an arrow from $i$ to $j$ (drawn in the color of vertex $j$) indicates that $i \prec j$, meaning that $j$ dominates $i$ in pairwise interactions. 
Panel (a) shows the usual RPS game with three types. 
Panel (b) shows a system with five types. 
Panels (c) and (d) show two different nine-type systems: (c) follows the cyclic “usual” extension of RPS, while (d) represents a hierarchical dominance structure. 
In all diagrams, edges point to and are colored by the winning type in each interaction.}
\label{fig:rpsDiagrams}
\end{figure}
\newpage

With these rules established, for $i \in [K]$ and $z \in \Delta_K$,
\[p_i^N(z) = Q_N(1)\,z_i + Q_N(2)\Bigl(z_i^2 + 2 \sum_{j \in [K] : j \prec i} z_i z_j\Bigr) \, .\]

Recalling that $\rho_N = 1 - Q_N(1)$, we deduce that
\begin{align*}
\frac{1}{\rho_N}\bigl(p_i^N(z) - z_i\bigr)
&= \frac{1}{\rho_N} \Bigl((1 - \rho_N)z_i + \rho_N\Bigl(z_i^2 + 2 \sum_{j \in [K] : j \prec i} z_i z_j\Bigr) - z_i\Bigr) \\
&= \Bigl(z_i^2 + 2 \sum_{j \in [K] : j \prec i} z_i z_j\Bigr) - z_i \\
&= z_i \Bigl(z_i + 2 \sum_{j \in [K] : j \prec i} z_j - 1\Bigr) \, .
\end{align*}

Noting that
\[
1 = z_i + \sum_{j \in [K] : j \prec i} z_j + \sum_{k \in [K] : i \prec k} z_k ,
\]
we further obtain
\[
\frac{1}{\rho_N}\bigl(p_i^N(z) - z_i\bigr)
= z_i \Bigl( \sum_{j \in [K] : j \prec i} z_j - \sum_{k \in [K] : i \prec k} z_k \Bigr) \, .
\]
Hence, by Theorem \ref{T1}, the drift of the frequency process $X$ defined in \eqref{SDEXY} is given by
\begin{equation*} 
\mu_i(z) = z_i \Bigl( \sum_{j \in [K] : j \prec i} z_j - \sum_{k \in [K] : i \prec k} z_k \Bigr) \, .
\end{equation*}

\subsection*{Majority Rule}
In this section, we provide details of the two-type model introduced in the \textit{Majority rule} section of the main text, and present the proof of the scaling limit given in Eq.~(3) therein. By construction, the model is closely connected to the to the study of hybrid zones \cite{barton1989adaptation} and the Allen-Cahn equation \cite{allen1979microscopic}. In a series of celebrated papers \cite{etheridge2017branching,etheridge2022wide,etheridge2024looking}, a deeper connection between the two concepts was established.
\medskip

We focus on a two-type model with seed bank and the majority rule: the type of an individual is determined by the most common type among its potential parents. More precisely, 
%
%
consider a population of fixed size $N \in \mathbb{N}$ composed of individuals of two types, $a$ and $A$.  Let $\rho_N \in [0,1]$ denote the selection coefficient (fitness advantage) of type $a$. At each generation $n = 1,2,\dots$, the type of each individual is determined according to the following rules:
\begin{itemize}
	\item With probability $1-\rho_N$, the event is neutral: an individual chooses exactly one parent from generation $n-1$ with probability $r_N$, or from generation $n-k  - 1$ with probability $(1-r_N) (1-q_N)^{k-1}q_N$ for $k \ge 1$. The individual inherits the type of the parent.   
    

	\item With probability $\rho_N$, the event is selective and three potential parents are chosen according to the rule above. The individual then adopts the most common type among its potential parents. 
\end{itemize}

Observe that for the type space $K=\{1,2\}$, this corresponds to the case
\[
\mathcal{C}=[K]\cup [K]^3,
\]
with distribution $Q_N$ given by 
$$Q_N(1)=1-\rho_N\ , \ Q_N(3)=\rho_N,$$ and the coloring rule
\begin{equation}\label{MR}
        c^N_j(i) = 1_{\{i=j\}},\qquad c_{(j_1,j_2,j_3)}^N(1) = \begin{cases}
            1& \text{ if } \sum_{i = 1}^3 j_i \le 4\\
            0& \text{ else.}
        \end{cases}
        ,\qquad c_{(j_1,j_2,j_3)}^N(2) = 1-c_{(j_1,j_2,j_3)}^N(1). 
\end{equation}

Now let $u^N(n)$ denote the frequency of type $a$ individuals in generation $n$ (see \eqref{X}), 
and let $y^N(n)$ denote the process tracking the geometrically weighted frequency of 
type $a$ in past generations, defined by (see \eqref{Y}.)
\[
y^N(n):= q_N \sum_{j = 1}^{\infty} (1-q_N)^{j-1} u^N(n-j), 
\qquad n \in \mathbb{Z},
\]
 Building on Theorem \ref{T1}, we derive the system of SDEs that 
describes the scaling limit of the frequency process $u^N$.

\begin{proposition}
For each $N$, let $\rho_N$ be the selection  coefficient of type $a$.  Suppose there exists $s >0$, $\sigma \ge 0$, $\alpha,\beta \ge 0$ so that
\begin{enumerate}
    \item $\lim_{N \to \infty} (N\rho_N)^{-1} = \sigma/s$.
    \item $\lim_{N \to \infty}(1-r_N) / \rho_N = \alpha, $ $ \lim_{N \to \infty}q_N/\rho_N = \beta$.
\end{enumerate}
Then the sequence $\left(u^N\left( \lfloor s t/\rho_N\rfloor\right), y^N\left( \lfloor s t/\rho_N\rfloor\right)\right)$ converges weakly in the space of càdlàg functions $\mathbb{D}(\R_+,[0,1] \times [0, 1])$, equipped with the Skorokhod $J_1$ topology, to the unique strong solution of the following system of SDEs:
\begin{align}\label{ACSI}
    \mathrm du(t) &= s u(t)(1-u(t))(2u(t) - 1) \mathrm dt + \sqrt{\sigma u(t)(1-u(t))}\mathrm dB_t + s \alpha \left(y(t) - u(t) \right)\mathrm dt\notag\\
    \mathrm dy(t)&= s \beta \left(u(t) - y(t) \right)\mathrm dt,
\end{align}
where $B$ is a standard one-dimensional Brownian motion. 
\end{proposition}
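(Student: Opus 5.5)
The plan is to derive the statement as a special case of Theorem~\ref{T1SI} with $K=2$, followed by a deterministic linear time change $t\mapsto st$. Identify type $a$ with type $1$ and type $A$ with type $2$. The model described above is then exactly the general model with $Q_N(1)=1-\rho_N$, $Q_N(3)=\rho_N$ and the coloring rule \eqref{MR}. So $u^N=X^N_1$, $y^N=Y^N_1$, and $X_2^N=1-u^N$, $Y_2^N=1-y^N$.

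\textbf{Step 1: verify the hypotheses of Theorem~\ref{T1SI}.} Write $z=(v,1-v)$. A majority of three independent draws is of type $1$ with probability $v^3+3v^2(1-v)$, so
\[
p_1^N(z)=(1-\rho_N)v+\rho_N\bigl(v^3+3v^2(1-v)\bigr).
\]
Hence
\[
\rho_N^{-1}\bigl(p_1^N(z)-v\bigr)=3v^2-2v^3-v=v(1-v)(2v-1),
\]
and this holds exactly for every $N$. By symmetry $\mu_2=-\mu_1$, so (i) holds with $\mu_1(x)=x_1(1-x_1)(2x_1-1)$.

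Condition (ii) holds with $\sigma'=\lim (N\rho_N)^{-1}=\sigma/s$, and (iii) is assumed directly. For (iv), note that $c^N_j(j)=1$, so $1-c_j^N(j)=0$. Moreover $\sum_{k\ge2}kQ_N(k)/\rho_N=3$ is bounded.

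Theorem~\ref{T1SI} then gives that $(u^N(\lfloor t/\rho_N\rfloor),y^N(\lfloor t/\rho_N\rfloor))$ converges in $J_1$ to $(\tilde u,\tilde y)=(X_1,Y_1)$, where $(X,Y)$ solves \eqref{SDEXY} with noise parameter $\sigma/s$.

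\textbf{Step 2: reduce \eqref{SDEXY} to one dimension.} For $K=2$, the matrix $\zeta$ has first row $(\sqrt{x_1(1-x_1)},0)$, using $1-x_1-x_2=0$. So the first component satisfies
\[
d\tilde u=\tilde u(1-\tilde u)(2\tilde u-1)\,dt+\sqrt{(\sigma/s)\,\tilde u(1-\tilde u)}\,dB_1+\alpha(\tilde y-\tilde u)\,dt,\qquad d\tilde y=\beta(\tilde u-\tilde y)\,dt.
\]
Equivalently, one can read off the generator \eqref{gen_lim} restricted to functions of $(x_1,y_1)$, which gives diffusion coefficient $\frac{\sigma}{2s}x_1(1-x_1)$.

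\textbf{Step 3: time change.} Set $u(t)=\tilde u(st)$ and $y(t)=\tilde y(st)$. Define the standard Brownian motion $B_t:=s^{-1/2}B_1(st)$. Then the drift terms acquire a factor $s$, and the noise becomes
\[
\sqrt{(\sigma/s)\,u(1-u)}\;\sqrt{s}\,dB_t=\sqrt{\sigma u(1-u)}\,dB_t.
\]
This yields \eqref{ACSI}.

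On the discrete side, $\lfloor st/\rho_N\rfloor$ is the prelimit process evaluated at time $st$. The map $\omega\mapsto\omega(s\,\cdot)$ is continuous on $\mathbb{D}(\mathbb{R}_+,[0,1]^2)$ in the $J_1$ topology, so weak convergence transfers by the continuous mapping theorem.

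\textbf{Step 4: uniqueness.} Existence and pathwise uniqueness of a strong solution to \eqref{ACSI} follow as in Proposition~\ref{P1}. The drift is Lipschitz on $[0,1]^2$, and the coefficient $\sqrt{u(1-u)}$ is $1/2$-Hölder, so Yamada--Watanabe applies. Alternatively, one can transport uniqueness from the unscaled system through the bijective time change.

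The only mildly delicate point is the bookkeeping in Step 3: the noise intensity of Theorem~\ref{T1SI} is $\sigma/s$ rather than $\sigma$, and the time change must absorb exactly this factor. One must also check that the time-scaled sequence, not just the one on scale $t/\rho_N$, is what converges. Both are handled by the scaling of Brownian motion and the continuity of deterministic time changes noted above.
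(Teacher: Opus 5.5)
Your proposal is correct and follows the same route as the paper: you specialize Theorem~\ref{T1SI} to $K=2$ with $Q_N(1)=1-\rho_N$, $Q_N(3)=\rho_N$ and the majority coloring rule, and you check that $\rho_N^{-1}(p_1^N(x)-x_1)=x_1(1-x_1)(2x_1-1)$. The paper's proof leaves several steps implicit that you spell out: the verification of condition (iv), and the bookkeeping of the time change $t\mapsto st$, where the noise parameter $\sigma/s$ becomes $\sigma$ by Brownian scaling, the drifts pick up the factor $s$, and $\omega\mapsto\omega(s\,\cdot)$ is continuous in $J_1$.
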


\begin{proof}
Let $x\in \Delta_2$. 
    We first consider $\zeta_{ij}$ as in Proposition \ref{P1}:
    \begin{align*}
        \zeta_{12}(x) = \zeta_{22}(x) = 0,\qquad \zeta_{11}(x) =\sqrt{x_1(1-x_1)} 
        ,\qquad \zeta_{22}(x)=-x_2\sqrt{\frac{x_1}{1-x_1}}.
    \end{align*}
    
    Hence, in light of Theorem \ref{T1}, it is enough to verify
    $$p_1^N(x) - x_1 = \rho_N x_1(1-x_1)(2x_1-1). $$
    From the coloring rule definition (\eqref{MR}) we have that:
    \begin{align*}
        p_1^N(x) &= (1-\rho_N)x_1 +\rho_N \sum_{j_1,j_2,j_3 \in \{1,2\}} \prod_{\ell = 1}^3 x_{j_\ell}c^N_{(j_1,j_2,j_3)}(1)=(1-\rho_N)x_1 +\rho_N \left( x_1^3+\sum_{j_1+j_2+j_3 = 4}x_1^2(1-x_1)\right)\\
        &=(1-\rho_N)x_1+\rho_N \left(x_1^3 + 3x_1^2(1-x_1) \right)\\
        &=(1-\rho_N)x_1 + \rho_N\left(3x_1^2-2x_1^3 \right)\\
        &=x_1 + \rho_N\left(3x_1^2-2x_1^3 -x_1 \right)\\
        &=x_1 + \rho_N x_1\left( 1-x_1\right)\left(2x_1 - 1 \right).
    \end{align*}
    On the other hand,
    \begin{align*}
        p_2^N(x) = 1-p_1^N(x) &= x_2 - \rho_N x_1(1-x_1)(2x_1 - 1).
    \end{align*}
\end{proof}

\section*{Codes}
The repository https://doi.org/10.5281/zenodo.20438680 contains the scripts used for all the results presented in this work. More precisely, the results for the \textit{Rock-Paper-Scissors} and \textit{Tournaments} sections were generated using the script {\small\texttt{SDELIMITSIMULATION.R}}.  It begins with the function \texttt{simulate\_SDE}, which implements an Euler-Maruyama scheme for simulating the stochastic differential equation (SDE) given in~\eqref{SDEXY}. A C++ implementation of the same numerical scheme is also provided in \texttt{SDELIMITSIMULATION.cpp}, offering a faster alternative for computing fixation times compared to the \texttt{R} implementation. 
Within \texttt{SDELIMITSIMULATION.R}, the function \texttt{createSigmaOrg} constructs the function $\zeta$ appearing in~\eqref{SDEXY}. The function \texttt{createSigmaAlt} provides a numerically stable implementation of a process whose law coincides with the solution of~\eqref{SDEXY}. 
Two additional functions, \texttt{simulate\_SDE\_Times\_rcpp} and \texttt{run\_SDE\_Times\_vec}, are used to compute fixation times of SDEs of the form~\eqref{SDEXY} under varying parameter configurations.\medskip

Regarding the section \textit{Majority Voting}, the script \texttt{ACSIMULATION.R} contains a complete implementation of an Euler-Maruyama scheme for the SDE described in~\eqref{AC}.

\end{document}